\newtheorem{corollary}[theorem]{Corollary}
\newcommand{\NN}{\mathbb{N}}
\renewcommand\epsilon\varepsilon
\renewcommand\phi\varphi
\renewcommand\leq\leqslant
\renewcommand\geq\geqslant
\newcommand\defeq{\stackrel{\mathrm{def}}{=}}
\newcommand\rev[1]{{#1}}
\newcommand\calF{\mathcal{F}}
\newcommand\QQ{\mathbb{Q}}
\newcommand\bD{\mathbf{D}}
\newcommand{\tr}{\leq_{\mathsf{P}}}
\newcommand{\equivt}{\equiv_{\mathsf{P}}}
\newcommand\escore{\mathsf{EScore}}
\newcommand\score{\mathsf{Score}}
\newcommand\eshapley{\escore_{\cshapley}}
\newcommand\shapley{\score_{\cshapley}}
\newcommand\ebanzhaf{\escore_{\cbanzhaf}}
\newcommand\ev{\mathsf{EV}}
\newcommand\evs{\mathsf{EV}_\star}
\newcommand\ennv{\mathsf{ENV}} % \env was already defined
\newcommand\envss{\mathsf{ENV}_{\star,\star}}
\newcommand\mc{\mathsf{MC}}
\newcommand\env{\mathsf{ENV}}
\newcommand\vars{\mathsf{Vars}}
\newcommand\out{\mathsf{out}} % for output gate
\newcommand\cshapley{c_{\mathrm{Shapley}}}
\newcommand\cbanzhaf{c_{\mathrm{Banzhaf}}}
\newcommand\ar{\mathsf{ar}}
\newcommand\const{\mathsf{Const}}
\newcommand{\pqe}{\mathrm{PQE}}
\newcommand{\hrulealg}[0]{\vspace{1mm} \hrule \vspace{1mm}}
\newcommand{\0}{\phantom{0}}
\newcommand{\phiex}{\phi_{\mathrm{ex}}}
\author{Pratik Karmakar}
\affiliation{\institution{National University of Singapore}\department{School of Computing}\city{Singapore}\country{Singapore}}
\affiliation{\institution{CNRS@CREATE Ltd}\city{Singapore}\country{Singapore}}
\email{pratik.karmakar@u.nus.edu}
\author{Mikaël Monet}
\affiliation{\institution{Inria}\city{Villeneuve d'Ascq}\country{France}}
\affiliation{\institution{CRIStAL, Université de Lille, CNRS}\city{Villeneuve d'Ascq}\country{France}}
\email{mikael.monet@inria.fr}
\author{Pierre Senellart}
\affiliation{\institution{DI ENS, ENS, PSL University, CNRS}\city{Paris}\country{France}}
\affiliation{\institution{Inria}\city{Paris}\country{France}}
\affiliation{\institution{IUF}\city{Paris}\country{France}}
\email{pierre@senellart.com}
\author{Stéphane Bressan}
\affiliation{\institution{National University of Singapore}\department{School of Computing}\city{Singapore}\country{Singapore}}
\email{steph@nus.edu.sg}
\keywords{Shapley value, Banzhaf value, probabilistic database,
provenance, knowledge compilation, d-D circuit}
\begin{document}

\title{Expected Shapley-Like Scores of Boolean Functions:
Complexity~and~Applications~to~Probabilistic~Databases}

\begin{abstract}
  Shapley values, originating in game theory and increasingly prominent in
explainable AI, have been proposed to assess the contribution of facts in
query answering over databases, along with other similar power indices
such as Banzhaf values. In this work we adapt these Shapley-like scores
to probabilistic settings, the objective being to compute their expected
value. We show that the computations of expected Shapley values and of
the expected values of Boolean functions are interreducible in polynomial
time, thus obtaining the same tractability landscape. We investigate the
specific tractable case where Boolean functions are represented as
deterministic decomposable circuits, designing a polynomial-time
algorithm for this setting. We present applications to probabilistic
databases through database provenance, and an effective implementation of
this algorithm within the ProvSQL system, which experimentally validates
its feasibility over a standard benchmark.

\end{abstract}

\maketitle

\section{Introduction}
\label{sec:intro}
The \emph{Shapley value} is a popular notion from cooperative game theory, introduced by Lloyd Shapley~\cite{shapley1953value}.
Its idea is to ``fairly'' distribute the rewards of a game among the
players.
The Banzhaf power index~\cite{banzhaf1964weighted}, another power distribution index with different
weights, also plays an important role in voting theory. These are two
instances of power indices for coalitions, which also include the
Johnston~\cite{johnston1977national,johnston1978measurement},
Deegan--Packel~\cite{deegan1978new}, and Holler--Packel
indices~\cite{holler1983power}, see~\cite{laurelle1999choice} for a survey.
Shapley and Banzhaf values, in particular, have
found recent applications in explainable machine
learning~\cite{KarczmarzM0SW22,van2022tractability} and valuation of data
inputs in data
management~\cite{deutch2022computing,abramovich2023banzhaf}.

In this work, we revisit the computation of such values (which we call
\emph{Shapley-like} values or scores) in a setting where data is uncertain.
\rev{A long line
of work
\cite{owen1972multilinear,weber1988probabilistic,laruelle2008potential,carreras2015multinomial,carreras2015coalitional,koster2017prediction,borkotokey2023expected}
has studied various ways in which scoring mechanisms can be
incorporated in a probabilistic setting.
More specifically, the authors of \cite{borkotokey2023expected}
propose the notion of \emph{expected Shapley value} and show that this is the only scoring mechanism
satisfying a certain set of axioms in a probabilistic setting. Motivated by this, we extend their notion
to other scores (e.g., Banzhaf), and study the complexity of the associated computational problems.}

Our objective is then
to investigate the tractability of expected Shapley-like value
computations for Boolean functions,
having in mind the potential application of
computation of expected Shapley-like values of facts for a query over
probabilistic databases.
In
particular, some results have been obtained in the literature that
reduces the complexity of (non-probabilistic) Shapley-value computation to and from the computation of
the model count of a Boolean function (or to the computation of the
probability of a query in probabilistic databases)
under some technical conditions
\cite{deutch2022computing,kara2023shapley}; we aim at understanding this
connection better by investigating whether \emph{expected} Shapley(-like) value
computation, which combines the computation of a power index and a
probabilistic setting, is harder than each of these aspects taken in
isolation.

We provide the following contributions. First (in
Section~\ref{sec:prelims}), we formally introduce the notion of
Shapley-like scores and of the expected value of such scores on Boolean
functions whose variables are assigned independent probabilities.
In Section~\ref{sec:equiv}, we investigate the connection between the
computation of expected Shapley-like scores and the computation of the
expected value of a Boolean function. In particular, we show a very
general result
(Corollary~\ref{cor:equiv}) that expected Shapley
value computation is interreducible in polynomial time to the expected
value computation problem over any class of Boolean functions for which it
is possible to compute its value over the empty set in polynomial time;
we also
obtain a similar result (Corollary~\ref{cor:equiv_banzhaf}) for the computation of
expected Banzhaf values.
We then assume in Section \ref{sec:dD} that we have a tractable
representation of a Boolean function as a decomposable and deterministic
circuit; in this case, we show a concrete polynomial-time algorithm for
Shapley-like score computation (Algorithm~\ref{alg:dD}) and some
simplifications thereof for specific settings. We then apply in
Section~\ref{sec:pdbs} these results to the case of probabilistic
databases, showing (Corollary~\ref{cor:pqe-escore}) that expected Shapley
value computation is interreducible in polynomial time to probabilistic
query evaluation. In Section~\ref{sec:exp} we show through an
experimental evaluation that the algorithms
proposed in this paper are indeed feasible in practical scenarios. Before
concluding the paper, we discuss related work in
Section~\ref{sec:relwork}.

For space reason, some proofs are relegated to the
\hyperref[appendix]{appendix}.

\section{Preliminaries}
\label{sec:prelims}
For~$n\in \NN$ we write~$[n]\defeq\{0,\ldots,n\}$. We
denote by $\mathsf{P}$ the class of problems solvable in polynomial
time. For a set~$V$, we denote by~$2^V$ its powerset.

\paragraph*{Boolean functions.}
A \emph{Boolean function over a finite set of variables}~$V$ is a
mapping~$\phi:2^V \to \{0,1\}$. To
talk about the complexity of problems over a class of Boolean
functions, one must first specify how functions are specified as input. By
a \emph{class of Boolean functions}, we then mean a class of
\emph{representations} of Boolean functions; for instance, truth tables,
decision trees, Boolean circuits, and so on,
with any sensible encoding. In particular, we consider that the size of~$V$
is always provided in unary as part of the input.
Let~$\phi:2^V \to \{0,1\}$ and~$x\in V$. We denote
by~$\phi_{+x}$ (resp.,~$\phi_{-x}$) the Boolean function on~$V\setminus \{x\}$ that
maps~$Z\subseteq V\setminus \{x\}$ to $\phi(Z\cup \{x\})$
(resp., to $\phi(Z)$).

\paragraph*{Expected value.}
For \rev{each} $x\in V$, let $p_x \in [0,1]$ be a probability value. For~$V'
\subseteq V$ and $Z\subseteq V'$, define $\Pi_{V'}(Z)$ to be the
probability of $Z$ being drawn from~$V'$ under the assumption that every
$x\in V'$ is chosen independently with probability~$p_x$. Formally:
\(\Pi_{V'}(Z) \defeq \bigg(\prod_{x\in Z} p_x \bigg)
\times \left(\prod_{x \in V' \setminus Z} (1-p_x)\right).\)
The~$p$-values do not appear in the notation $\Pi_{V'}(Z)$: this is to
simplify notation. For~$\phi:2^V \to \{0,1\}$, define then the \emph{expected
value of~$\phi$} as $\ev(\phi) \defeq \sum_{Z\subseteq V} \Pi_V(Z)\phi(Z)$.
Note that this is simply the probability of $\phi$ being true.
We then define the corresponding problem for a class of Boolean
functions~$\calF$.

\medskip
\begin{center}
\fbox{\begin{tabular}{lp{10cm}}
    \small{PROBLEM} : & $\ev(\calF)$ \hspace{1cm}(\emph{Expected Value})
\\
{\small INPUT} : & A Boolean function $\phi \in \calF$ over variables~$V$ and
probability values~$p_x$ for each $x\in V$
\\
{\small OUTPUT} : & The quantity $\ev(\phi)$
\end{tabular}}
\end{center}
Here, we consider as usual that the probabilities values are rational
numbers $\frac{p}{q}$ for~$(p,q)\in \NN\times \rev{(\NN \setminus \{0\})}$, provided as ordered pairs $(p,q)$
where $p$ and $q$ themselves are encoded in binary.

\begin{example}\label{exa:ev}
  \rev{Consider the Boolean function in disjunctive normal form:
  $\phiex=(A\land
  a)\lor (C\land c)$ on variables $\{A,a,C,c\}$. Assume the probabilities
  for these variables given in the second column of
  Table~\ref{tab:shap-eshap}.
  Then
  its expected value can be computed (since $\phiex$ is read-once) as:
  $\ev(\phiex)=1-(1-0.4\times 0.5)\times(1-0.6\times 0.8)=1-0.8\times
0.52=0.584$.}
\end{example}

\paragraph*{Shapley-like scores.}
Let~$c:\NN \times \NN \to \QQ$ be a function, that we call the
\intro{coefficient function}, and let~$\phi:2^V \to \{0,1\}$ and~$x\in V$.
Define the \emph{Shapley-like score with coefficients~$c$ of $x$ in $V$\! with
respect to~$\phi$}, or simply \emph{score} when clear from context, by
\[\score_c(\phi,V,x) \defeq \sum_{E \subseteq V\setminus \{x\}}
    c(|V|,|E|)\times \big[\phi(E\cup \{x\}) - \phi(E)\big].
  \]
  \begin{example}
    Let $\cshapley(k,\ell) \defeq
    \frac{\ell!(k-l-1)!}{k!}=\binom{k-1}{l}^{-1}k^{-1}$ and
    $\cbanzhaf(k,\ell)$ $\defeq 1$. Then, by definition, $\score_{\cshapley}(\phi,V,x)$
    (resp., $\score_{\cbanzhaf}(\phi,V,x)$) is the usual Shapley (resp., Banzhaf)
    value, with set of players~$V$ and wealth function~$\phi$. The
    Penrose--Banzhaf power~\cite{kirsch2010power}, a
    normalization of Banzhaf
  values, can also be defined by \kl{coefficients} $(k,\ell)\mapsto
  2^{-k+1}$.
  \end{example}
  For each fixed \kl{coefficient function}~$c$ and class of Boolean functions~$\calF$, we
define the corresponding computational problem.
\medskip
\begin{center}
\fbox{\begin{tabular}{lp{10cm}}
    \small{PROBLEM} : & $\score_c(\calF)$
\\
{\small INPUT} : & A Boolean function $\phi \in \calF$ over variables~$V$, a variable~$x\in V$
\\
{\small OUTPUT} : & The quantity $\score_c(\phi,V,x)$
\end{tabular}}
\end{center}

\begin{table}
    \caption{\rev{(Expected) Shapley values for
    $\phi_{\mathrm{ex}}=(A\land a)\lor (C\land c)$}}
  \centering\small
    \rev{\begin{tabular}{cccc}
      \toprule
      $x\in V$ & $p_x$ & $\score_{\cshapley}(\phi_{\mathrm{ex}},V,x)$&
      $\escore_{\cshapley}(\phi_{\mathrm{ex}}, x)$\\
        \midrule
        A & 0.4 & 0.25 & 0.076 \\
        a & 0.5 & 0.25 & 0.076 \\
        C & 0.6 & 0.25 & 0.216 \\
        c & 0.8 & 0.25 & 0.216 \\
        \bottomrule
    \end{tabular}}
  \label{tab:shap-eshap}
\end{table}

\paragraph*{Expected Shapley-like scores.}
We now introduce the probabilistic variant of Shapley-like scores, which is
our main object of study.
\begin{definition}
  \label{def:eshapley}
  Let~$c$ be a \kl{coefficient function}, $\phi:2^V \to \{0,1\}$ a~Boolean function
  over variables~$V$, probability values $p_y$ for $y\in V$, and~$x\in V$.
  Define the \emph{expected score of~$x$ for~$\phi$} as:
  \[\escore_c(\phi,x) \defeq \sum_{\substack{Z\subseteq V\\ x\in Z}}
    \left(\Pi_V(Z)\times \score_c(\phi,Z,x)\right),
  \]
  where in $\score_c(\phi,Z,x)$ we see~$\phi$ as a function from~$2^Z$ to~$\{0,1\}$.
\end{definition}
In words, this is the expected value of the corresponding score, when players are
independently selected to be part of the cooperative game. Notice that
subsets~$Z\subseteq V$ not containing~$x$ are not summed over: this is because in
this case~$x$ is not a player of the selected game and we thus declare its
“contribution” to be null.
\rev{We point out that the notion of expected Shapley value has been defined and studied in \cite{borkotokey2023expected}
in terms of the properties that it satisfies. The authors show that this is the only value satisfying some natural sets of axioms, making it
a natural candidate for score attribution in a probabilistic setting.}
This definition is also strongly motivated by its
applications to probabilistic databases (cf.
Section~\ref{sec:pdbs}). We then define the corresponding computational problem, for
a fixed~$c$ and~$\calF$.
\medskip
\begin{center}
\fbox{\begin{tabular}{lp{10cm}}
    \small{PROBLEM} : & $\escore_c(\calF)$ \hspace{1cm} (\emph{Expected Score})
  \\
{\small INPUT} : & A Boolean function $\phi \in \calF$ over variables~$V$,
probability values~$p_y$ for each $y\in V$, a variable~$x\in V$
\\
{\small OUTPUT} : & The quantity $\escore_c(\phi,x)$
\end{tabular}}
\end{center}

\begin{example}
  \rev{Continuing from Example~\ref{exa:ev}, we can compute the Shapley value
  and the expected Shapley value of variables appearing in $\phiex$. By symmetry,
  it is easy to see that all variables have the same Shapley value; on
  the other hand the expected Shapley values depend on the probability of
  each clause to be true. See Table~\ref{tab:shap-eshap} for all
  (expected) Shapley values. If other variables were in $V$, their
  Shapley and expected Shapley values w.r.t.\ $\phiex$ would be $0$, regardless
  of their probabilities, since they do not play a role in the
satisfaction of~$\phiex$.}
\end{example}

\paragraph*{Reductions.}
For two computational problems~$A$ and~$B$,
we write $A \tr B$ to assert the existence of a polynomial-time Turing
reduction from~$A$ to~$B$, i.e., a polynomial-time reduction that is allowed to use~$B$ as an
oracle. We write~$A \equivt B$ when $A \tr B$ and $B \tr A$, meaning that the
problems are equivalent under such reductions.
Using this notation we can state
a first trivial fact:
\begin{fact}
  \label{fact:eshapley-to-shapley}
We have $\score_c(\calF) \tr \escore_c(\calF)$ for any \kl{coefficient
function}~$c$ and class of Boolean functions~$\calF$.
\end{fact}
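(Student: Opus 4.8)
The plan is to realize $\score_c(\phi,V,x)$ as a single call to the $\escore_c(\calF)$ oracle on the very same Boolean function $\phi$ and the same distinguished variable $x$, after choosing a trivial probability assignment that collapses the expectation to a single term. Concretely, given an instance $(\phi,V,x)$ of $\score_c(\calF)$, I would form the instance of $\escore_c(\calF)$ consisting of $\phi$, the variable $x$, and the probability values $p_y \defeq 1$ for every $y\in V$ (encoded as the pair $(1,1)$), then invoke the oracle and return its answer verbatim.

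The key step is to check that this choice of probabilities makes $\escore_c(\phi,x)=\score_c(\phi,V,x)$. Recall $\Pi_V(Z)=\big(\prod_{y\in Z}p_y\big)\times\big(\prod_{y\in V\setminus Z}(1-p_y)\big)$. With $p_y=1$ for all $y$, the factor $\prod_{y\in V\setminus Z}(1-p_y)$ is a product of zeros whenever $V\setminus Z\neq\emptyset$, hence $\Pi_V(Z)=0$ for every $Z\subsetneq V$, while $\Pi_V(V)=\prod_{y\in V}1=1$. Plugging this into Definition~\ref{def:eshapley}, the sum $\sum_{Z\subseteq V,\,x\in Z}\Pi_V(Z)\,\score_c(\phi,Z,x)$ retains only the term $Z=V$ (which does contain $x$, since $x\in V$), giving exactly $\score_c(\phi,V,x)$.

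Finally I would observe that the reduction runs in polynomial time: it copies $\phi$ and $x$ unchanged and appends $|V|$ constant-size probability encodings, and $|V|$ is given in unary, so the transformed instance has size polynomial in the input size; the reduction makes a single oracle call. I do not anticipate any real obstacle here — this is essentially a definitional unfolding — the only point requiring a line of care is confirming that the degenerate distribution $p_y\equiv 1$ is a legal input (it is, being rational) and that it annihilates every proper subset of $V$ in $\Pi_V$.
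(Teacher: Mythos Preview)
Your proposal is correct and matches the paper's own justification essentially verbatim: the paper simply notes that $\escore_c(\phi,x)=\score_c(\phi,V,x)$ when $p_y=1$ for all $y\in V$, which is exactly the collapse-of-the-expectation argument you spell out. There is nothing to add.
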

This is simply because $\escore_c(\phi,x) = \score_c(\phi,V,x)$ when $p_y=1$
for all~$y\in V$, so that our probabilistic variants of such scores are proper
generalizations of the non-probabilistic ones.

\section{Equivalence with Expected Values}
\label{sec:equiv}
\begin{toappendix}
  \label{appendix}
\end{toappendix}

In this section we link the complexity of computing expected Shapley-like
scores with that of computing expected values. The point is that $\ev(\calF)$ is
a central problem that has already been studied in depth for most meaningful
classes of Boolean functions, with classes for which that problem is in
$\mathsf{P}$ while the general problem is $\mathsf{\#P}$-hard.
In a sense then, if we can show for some problem
$A$ that $A \equivt \ev(\calF)$, this settles the complexity of $A$.
We start in Section~\ref{subsec:ev-to-score} by the direction most
interesting for efficient algorithms:
\rev{from expected scores to expected values}.
We show that this is always possible,
under the assumption that the \kl{coefficient function} is computable in
polynomial time. We then give results for the other direction in
Section~\ref{subsec:escore-to-ev}, where the picture is more complex.

\subsection{\rev{From Expected Scores to Expected Values}}
% \subsection{From Expected Values to Expected Scores}
\label{subsec:ev-to-score}

Let us call a \kl{coefficient function}~$c$ \intro{tractable} if $c(k,\ell)$ can
be computed in $\mathsf{P}$ when $k$ and $\ell$ are given in unary as input. It is
easy to see that $\cbanzhaf$ and its normalized version are
\kl{tractable}.
This is also the case of~$\cshapley$, using the fact that the
$\binom{k}{\ell}$ binomial
coefficient can be computed in time $O(k\times\ell)$ by dynamic programming
(assuming arguments in unary). Under this
assumption, we show that computing expected Shapley-like scores reduces in polynomial
time to computing expected values.

\begin{theorem}
  \label{thm:ev-to-escore}
We have~$\escore_c(\calF) \tr \ev(\calF)$
for any \kl{tractable} \kl{coefficient function}~$c$ and any class
$\calF$\! of Boolean functions.
\end{theorem}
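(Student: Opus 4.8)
The plan is to give a polynomial-time Turing reduction that, on input $(\phi,(p_y)_{y\in V},x)$, makes polynomially many calls to the $\ev(\calF)$ oracle — always on the \emph{same} function $\phi$ (which is the only function of $\calF$ we have at hand), just with different probability assignments — and reconstructs $\escore_c(\phi,x)$ from the answers by polynomial-time arithmetic. Write $\psi(E)\defeq\phi_{+x}(E)-\phi_{-x}(E)$ for $E\subseteq V\setminus\{x\}$, and $R_E\defeq(V\setminus\{x\})\setminus E$. The first step is purely combinatorial: exchanging the two sums in the definitions of $\escore_c$ and $\score_c$, and writing each $Z\ni x$ as $Z=E\cup\{x\}\cup U$ with $E\subseteq V\setminus\{x\}$ and $U\subseteq R_E$, one checks that $\Pi_V(Z)$ factors as $p_x\big(\prod_{y\in E}p_y\big)\Pi_{R_E}(U)$, whence
\[
  \escore_c(\phi,x)\;=\;p_x\sum_{E\subseteq V\setminus\{x\}}\psi(E)\Big(\prod_{y\in E}p_y\Big)\sum_{U\subseteq R_E}\Pi_{R_E}(U)\,c\big(|E|+|U|+1,\,|E|\big).
\]
Grouping this by the pair of cardinalities $(k,\ell)\defeq(|E|+|U|+1,\,|E|)$ — note $0\le\ell<k\le|V|$ — and using that $c$ is tractable (so each value $c(k,\ell)$ is computable in polynomial time since $k,\ell\le|V|$), it suffices to compute the $O(|V|^2)$ numbers
\[
  G_{k,\ell}\;\defeq\;\sum_{\substack{E\subseteq V\setminus\{x\}\\|E|=\ell}}\psi(E)\Big(\prod_{y\in E}p_y\Big)\sum_{\substack{U\subseteq R_E\\|U|=k-\ell-1}}\Pi_{R_E}(U),
\]
after which $\escore_c(\phi,x)=p_x\sum_{\ell}\sum_{k>\ell}c(k,\ell)\,G_{k,\ell}$.

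To extract the $G_{k,\ell}$, I would use a two-variable polynomial interpolation. Introduce parameters $s,t$ and set
\[
  \Lambda(s,t)\;\defeq\;\sum_{E\subseteq V\setminus\{x\}}\psi(E)\prod_{y\in E}(s\,p_y)\prod_{y\in R_E}\big((1-p_y)+t\,p_y\big).
\]
Expanding $\prod_{y\in R_E}((1-p_y)+t\,p_y)=\sum_{U\subseteq R_E}t^{|U|}\Pi_{R_E}(U)$ shows that the coefficient of $s^{\ell}t^{j}$ in $\Lambda$ is exactly $G_{\ell+j+1,\,\ell}$, and that $\deg_s\Lambda\le|V|-1$ and $\deg_t\Lambda\le|V|-1$. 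Hence, evaluating $\Lambda$ at the $|V|^2$ grid points $(s,t)\in\{1,\dots,|V|\}\times\{0,\dots,|V|-1\}$ and running a standard tensor-product interpolation recovers all the $G_{k,\ell}$ in polynomial time.

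It remains to evaluate $\Lambda(s,t)$ at one point using the oracle. Fix integers $s\ge 1$, $t\ge 0$ and put $q_y\defeq\frac{s\,p_y}{1+(s+t-1)\,p_y}$ for $y\neq x$; then $1-q_y=\frac{(1-p_y)+t\,p_y}{1+(s+t-1)\,p_y}$, and since $s\ge 1$ one has $0\le q_y\le 1$ with $q_y$ a rational of polynomial bit-size. Calling the $\ev(\phi,\cdot)$ oracle with these $q_y$ and with the probability of $x$ set to $1$ returns $\sum_{E\subseteq V\setminus\{x\}}\phi_{+x}(E)\prod_{y\in E}q_y\prod_{y\in R_E}(1-q_y)$, and with it set to $0$ it returns the analogous sum with $\phi_{-x}$ in place of $\phi_{+x}$; subtracting the two answers and multiplying by the polynomial-time-computable factor $\prod_{y\in V\setminus\{x\}}\big(1+(s+t-1)\,p_y\big)$ yields precisely $\Lambda(s,t)$. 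Two oracle calls per grid point — $2|V|^2$ in total — complete the reduction.

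The step requiring genuine care is the design of the substitution $q_y(s,t)$: the coefficient $c$ depends on \emph{both} $|E|$ and $|E\cup U|$, so for a general tractable $c$ a single scalar parameter cannot disentangle the variables of $E$ from those of $U$, which forces a two-parameter substitution; and among the natural candidates one must pick one — here with denominator $1+(s+t-1)p_y$, which stays $\ge 1$ whenever $s\ge1$ and $t\ge0$ — whose normalized weights remain valid probabilities even when some $p_y$ equal $0$ or $1$. The combinatorial rearrangement of the first step, the factoring of $\Pi_V$, the degree bounds, the interpolation, and the bit-size bookkeeping for the rationals involved are all routine.
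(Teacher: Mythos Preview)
Your argument is correct. The combinatorial rearrangement, the design of the probabilities $q_y(s,t)$, the validity range $s\ge 1,\,t\ge 0$, and the bivariate tensor-product interpolation all check out; in particular $1-q_y=\big((1-p_y)+tp_y\big)/\big(1+(s+t-1)p_y\big)$ is exactly what is needed, and the factor $\prod_{y\neq x}\big(1+(s+t-1)p_y\big)$ cancels the common denominator to give $\Lambda(s,t)$.

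The paper proceeds differently. It introduces two intermediate problems, $\evs$ (expected value restricted to subsets of a fixed size) and $\envss$ (the ``nested'' version with two size parameters), and proves the chain $\escore_c\tr\envss\tr\evs\tr\ev$ via three separate lemmas, the last two being univariate polynomial interpolations with substitutions of the shape $p^{z}_y=z p_y/(1+(2z-1)p_y)$ and $p^{z}_y=z p_y/(1+(z-1)p_y)$. Your quantities $G_{k,\ell}$ are precisely the differences $\ennv_{k-1,\ell}(\phi_{+x})-\ennv_{k-1,\ell}(\phi_{-x})$ that the paper computes after its first lemma, and your single two-parameter substitution fuses the paper's two interpolation steps into one bivariate interpolation. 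What you gain is directness and a slightly smaller number of oracle calls; what the paper's route buys is modularity --- the intermediate problem $\envss$ is isolated as a standalone object and later reused verbatim (via Equation~\eqref{eq:envss-to-escore}) as the starting point for the explicit bottom-up algorithm on deterministic decomposable circuits in Section~\ref{sec:dD}.
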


We obtain for instance that~$\escore_c(\calF)$ is in $\mathsf{P}$ for decision trees,
ordered binary decision diagrams (OBDDs), deterministic and decomposable
Boolean circuits, Boolean circuits of bounded
treewidth~\cite{amarilli2020connecting}, and so on, since~$\ev(\calF)$ is
in $\mathsf{P}$
for these classes. By Fact~\ref{fact:eshapley-to-shapley},
this also recovers the results from \cite{deutch2022computing,kara2023shapley,abramovich2023banzhaf}
that (non-expected) Shapley and Banzhaf scores are in $\mathsf{P}$ for the tractable classes
that they consider.

We prove Theorem~\ref{thm:ev-to-escore} in the remaining of this section. To do
so, we first define two intermediate problems.
\medskip
\begin{center}
\fbox{\begin{tabular}{lp{10cm}}
    \small{PROBLEM} : & $\evs(\calF)$ \hspace{.5cm}(\emph{Expected Value of Fixed Size})
\\
{\small INPUT} : & A Boolean function $\phi \in \calF$ over variables~$V$,
probabilities $p_x$ for each $x\in V$, and~$k\in [|V|]$
\\
{\small OUTPUT} : & The quantity $\ev_k(\phi) \defeq \sum_{\substack{Z\subseteq V\\|Z|=k}} \Pi_V(Z)\phi(Z)$
\end{tabular}}
\end{center}
\medskip
\begin{center}
\fbox{\begin{tabular}{lp{10cm}}
    \small{PROBLEM} : & $\envss(\calF)$ \hspace{.1cm}(\emph{Expected Nested Value of Fixed Sizes})
\\
{\small INPUT} : & A Boolean function $\phi \in \calF$ over $V$,
probabilities $p_x$ for each $x\in V$, and integers~$k,\ell \in [|V|]$
\\
{\small OUTPUT} : & The quantity $\ennv_{k,\ell}(\phi) \defeq
\sum_{\substack{Z\subseteq V\\|Z|=k}} \Pi_V(Z) \sum_{\substack{E\subseteq
Z\\|E|=\ell}}\phi(E)$
\end{tabular}}
\end{center}
Notice that~$\ennv_{k,\ell}(\phi)= 0$ when~$k < \ell$. Also, observe
that~$\ev(\phi) = \sum_{k=0}^{|V|} \ev_{k}(\phi)$ and that $\ev_k(\phi) =
\ennv_{k,k}(\phi)$, so that~$\ev(\calF) \tr \evs(\calF) \tr \envss(\calF)$
for any~$\calF$.

We now prove the chain of reductions
$\escore_c(\calF) \tr \envss(\calF) \tr \evs(\calF) \tr \ev(\calF)$, in this
order, which implies Theorem~\ref{thm:ev-to-escore} indeed.

\begin{lemma}
\label{lem:envss-to-escore}
We have $\escore_c(\calF) \tr
\envss(\calF)$ for any \kl{tractable} coefficient function $c$ and any class of Boolean functions~$\calF$.
\end{lemma}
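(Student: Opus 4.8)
The plan is to express $\escore_c(\phi,x)$ as a linear combination, with coefficients computable in polynomial time, of the quantities $\ennv_{k,\ell}(\phi')$ for suitable Boolean functions $\phi'$ obtained from $\phi$ by the conditioning operations $\phi_{+x}$ and $\phi_{-x}$ introduced in the preliminaries. First I would start from Definition~\ref{def:eshapley}, expand the inner $\score_c(\phi,Z,x)$ as $\sum_{E\subseteq Z\setminus\{x\}} c(|Z|,|E|)\big[\phi(E\cup\{x\})-\phi(E)\big]$, and note that since $x\in Z$ is fixed by the outer sum, writing $Z = Z'\cup\{x\}$ with $Z'\subseteq V\setminus\{x\}$ and $E\subseteq Z'$ decouples the role of $x$: we get $\Pi_V(Z) = p_x\cdot\Pi_{V\setminus\{x\}}(Z')$, and $\phi(E\cup\{x\})=\phi_{+x}(E)$, $\phi(E)=\phi_{-x}(E)$. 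So the whole expression becomes
\[
\escore_c(\phi,x) \;=\; p_x\sum_{Z'\subseteq V\setminus\{x\}}\Pi_{V\setminus\{x\}}(Z')\sum_{E\subseteq Z'} c(|Z'|+1,|E|)\big[\phi_{+x}(E)-\phi_{-x}(E)\big].
\]

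The next step is to break the double sum over $Z'$ and $E\subseteq Z'$ by the sizes $k=|Z'|$ and $\ell=|E|$. Since $c(|Z'|+1,|E|)=c(k+1,\ell)$ depends only on these sizes, it pulls out of the innermost sum, and what remains is exactly $\sum_{E\subseteq Z',|E|=\ell}\psi(E)$ summed over $Z'$ of size $k$ with weight $\Pi_{V\setminus\{x\}}(Z')$ — that is, $\ennv_{k,\ell}(\psi)$ evaluated over the variable set $V\setminus\{x\}$, for $\psi\in\{\phi_{+x},\phi_{-x}\}$. Thus
\[
\escore_c(\phi,x) \;=\; p_x\sum_{k=0}^{|V|-1}\ \sum_{\ell=0}^{k} c(k+1,\ell)\big[\ennv_{k,\ell}(\phi_{+x})-\ennv_{k,\ell}(\phi_{-x})\big],
\]
where the two $\ennv$ terms are over $V\setminus\{x\}$, which has $|V|-1$ elements. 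This is a polynomial-size (quadratically many terms) linear combination.

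To finish, I would observe that $\phi_{+x}$ and $\phi_{-x}$, together with the restricted probability values $(p_y)_{y\in V\setminus\{x\}}$ and the pair $(k,\ell)$, form valid inputs to $\envss(\calF)$ — here one needs the mild technical point, standard in this line of work, that conditioning a representation in $\calF$ on a variable yields a representation in $\calF$ (or at least one can appeal to the class being "reasonable"/closed under conditioning; for the concrete applications this is clear, e.g.\ for d-D circuits one just fixes the input gate). Each of the $O(|V|^2)$ coefficients $c(k+1,\ell)$ is computable in polynomial time because $c$ is \kl{tractable} and $k,\ell\le |V|$ are bounded polynomially, hence presentable in unary in polynomial time; and $p_x$ is part of the input. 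So the reduction makes $O(|V|^2)$ oracle calls to $\envss(\calF)$, does polynomially much arithmetic on rationals of polynomial bit-length, and outputs the answer. The main obstacle, such as it is, is the bookkeeping in the re-indexing: getting the argument of $c$ right (it is $|Z|=k+1$, not $k$, since $Z=Z'\cup\{x\}$) and correctly handling the range of $\ell$ and the $k<\ell$ degenerate cases where $\ennv_{k,\ell}=0$; beyond that, everything is routine.
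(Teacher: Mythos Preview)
Your derivation of the identity
\[
\escore_c(\phi,x) \;=\; p_x\sum_{k=0}^{|V|-1}\sum_{\ell=0}^{k} c(k+1,\ell)\big[\ennv_{k,\ell}(\phi_{+x})-\ennv_{k,\ell}(\phi_{-x})\big]
\]
is correct and matches the paper exactly (this is Equation~\eqref{eq:envss-to-escore} there). The gap is in the final step, which you dismiss as a ``mild technical point''.

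The lemma is stated for \emph{any} class~$\calF$, with no closure assumption whatsoever. You cannot feed $\phi_{+x}$ or $\phi_{-x}$ to the $\envss(\calF)$ oracle unless you have a representation of these functions \emph{in~$\calF$}, and the lemma does not assume $\calF$ is closed under conditioning. Nor does ``reasonable'' help: that notion (defined later in the paper) only says $\phi(\emptyset)$ is computable, not that conditionings can be represented. The paper explicitly flags this: ``We point out that this is not obvious, because~$\calF$ might not be \kl{closed under conditioning}, and unfortunately setting $p_x$ to $0$ or $1$ is not enough to directly give us the values we want.'' Indeed, setting $p_x=1$ and calling the oracle on $\phi$ with indices $(k,\ell)$ does \emph{not} return $\ennv_{k,\ell}(\phi_{+x})$: the outer sum still ranges over subsets of all of~$V$, and subsets containing~$x$ contribute terms where the inner $E$ may or may not contain~$x$.

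The paper's fix is the actual content of the proof: it shows the relation
\[
\env^z_{i,j}(\phi) = z\big[\ennv_{i-1,j}(\phi_{-x}) + \ennv_{i-1,j-1}(\phi_{+x})\big] + (1-z)\,\ennv_{i,j}(\phi_{-x}),
\]
where $\env^z$ means the oracle is called on $\phi$ itself with $p_x$ set to~$z$. Calling the oracle with $z=0$ recovers all $\ennv_{i,j}(\phi_{-x})$; then calling with $z=1$ and subtracting the now-known $\phi_{-x}$ terms recovers all $\ennv_{i-1,j-1}(\phi_{+x})$. All oracle calls are on $\phi\in\calF$, so no closure assumption is needed. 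Your argument, as written, proves the lemma only for classes closed under conditioning, which is strictly weaker than what is claimed.
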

\begin{proof}
Let~$\phi:2^V \to \{0,1\}$ in $\calF$, probability values~$p_y$ for~$y\in V$, and~$x\in V$.
We wish to compute~$\escore_c(\phi,x)$.
Observe that $\escore_c(\phi,x) = A - B$, where
\[
  A = \sum_{\substack{Z\subseteq V\\ x\in Z}} \Pi_V(Z) \sum_{E \subseteq
  Z\setminus \{x\}}
    c(|Z|,|E|) \phi(E\cup \{x\})\qquad
  B = \sum_{\substack{Z\subseteq V\\ x\in Z}} \Pi_V(Z) \sum_{E \subseteq
  Z\setminus \{x\}}
c(|Z|,|E|) \phi(E).\]
Let us focus on~$A$. Letting~$V' \defeq V\setminus \{x\}$, notice that these are
the variables over which~$\phi_{+x}$ is defined. Letting~$n\defeq |V'|$, we have
\begin{align*}
  A &= \sum_{\substack{Z\subseteq V\\ x\in Z}} \Pi_V(Z) \sum_{E \subseteq
  Z\setminus \{x\}}
  c(|Z|,|E|) \phi_{+x}(E) = p_x \sum_{Z\subseteq V'} \Pi_{V'}(Z) \sum_{E \subseteq Z}
  c(|Z|+1,|E|) \phi_{+x}(E)\\
    &= p_x \sum_{Z\subseteq V'}  \sum_{E \subseteq Z}
  c(|Z|+1,|E|) \Pi_{V'}(Z) \phi_{+x}(E) = p_x \sum_{k=0}^n \sum_{\substack{Z\subseteq V'\\|Z|=k}} \sum_{\ell=0}^k  \sum_{\substack{E \subseteq Z\\|E|=\ell}}
  c(k+1,\ell) \Pi_{V'}(Z) \phi_{+x}(E)\\
    &= p_x \sum_{k=0}^n \sum_{\ell=0}^k c(k+1,\ell) \sum_{\substack{Z\subseteq V'\\|Z|=k}}   \sum_{\substack{E \subseteq Z\\|E|=\ell}}
   \Pi_{V'}(Z) \phi_{+x}(E) = p_x \sum_{k=0}^n \sum_{\ell=0}^k c(k+1,\ell) \ennv_{k,\ell}(\phi_{+x}).\\
\end{align*}
We can do exactly the same for~$B$ (replacing~$\phi_{+x}$ by
$\phi_{-x}$), after which we obtain:
\begin{equation}
  \escore_c(\phi,x) = p_x \sum_{k=0}^n \sum_{\ell=0}^k c(k+1,\ell)
  \big(\ennv_{k,\ell}(\phi_{+x}) - \ennv_{k,\ell}(\phi_{-x})\big).
  \label{eq:envss-to-escore}
\end{equation}

We can compute the \kl{coefficients} $c(k+1,\ell)$ in $\mathsf{P}$ because $c$ is
\kl{tractable}. Therefore, all that is left to show is that we can compute in
$\mathsf{P}$ all the values~$\ennv_{k,\ell}(\phi_{+x})$ and
$\ennv_{k,\ell}(\phi_{-x})$ for~$k,\ell \in [n]$. We point out that this is not
obvious, because~$\calF$ might not be \kl{closed under conditioning}, and
unfortunately setting $p_x$ to $0$ or $1$ is not enough to directly give us the values
we want.
In \cite{kara2023shapley}, this annoying subtlety is handled by using the
closure under OR-substitutions of the class~$\calF$ (see the proof of their
Lemma~3.2). In our case, we will overcome this problem by using the fact that
we can freely choose the probabilities.

Let~$z \in [0,1]$, and for~$y\in V'=V\setminus \{x\}$ define~$p^z_y \defeq p_y$,
and $p^z_x = z$. Define~$\Pi^z$ and~$\envss^z(\phi)$ as expected. We claim
that the following equation holds, for~$i,j \in [n+1]$:
\begin{align}
  \label{eq:simulate-conditioning}
  \env^z_{i,j}(\phi) = z \big[\ennv_{i-1,j}(\phi_{-x}) + \ennv_{i-1,j-1}(\phi_{+x})\big] + (1-z) \ennv_{i,j}(\phi_{-x}),
\end{align}
where we extended the definition of $\envss(\phi_{+x})$ and $\envss(\phi_{-x})$
to have value zero for out-of-bound $(i,j)$-indices. Before proving this claim,
let us explain why this allows us to conclude. Indeed, we can then use the oracle
to~$\envss(\calF)$ with $z=0$ to compute all the values
$\ennv_{k,\ell}(\phi_{-x})$. Once these are known, we use the oracle again but
this time with~$z=1$, and thanks to Equation~\eqref{eq:simulate-conditioning}
again we can recover all the values~$\ennv_{k,\ell}(\phi_{+x})$.

Therefore, all that is left to do is prove Equation~\eqref{eq:simulate-conditioning}.
We have:
\[
  \env^z_{i,j}(\phi) \defeq \sum_{\substack{Z\subseteq V\\|Z|=i}} \Pi_V(Z) \sum_{\substack{E\subseteq Z\\|E|=j}}\phi(E)
   = \underbrace{\sum_{\substack{Z\subseteq V\\|Z|=i\\x\notin Z}}
   \Pi_V(Z) \sum_{\substack{E\subseteq Z\\|E|=j}}\phi(E)}_\alpha
   \:+\: \underbrace{\sum_{\substack{Z\subseteq V\\|Z|=i\\x\in Z}} \Pi_V(Z)
   \sum_{\substack{E\subseteq Z\\|E|=j}}\phi(E)}_\beta.
\]
With $\alpha$ and~$\beta$ the two terms defined above, we have:
\begin{align*}
  \alpha &= (1-z) \sum_{\substack{Z\subseteq V'\\|Z|=i}} \Pi_{V'}(Z) \sum_{\substack{E\subseteq Z\\|E|=j}}\phi(E)
    = (1-z) \sum_{\substack{Z\subseteq V'\\|Z|=i}} \Pi_{V'}(Z) \sum_{\substack{E\subseteq Z\\|E|=j}}\phi_{-x}(E)
    = (1-z) \ennv_{i,j}(\phi_{-x}).
\end{align*}

Let us now inspect~$\beta$.
\[
  \beta = z \sum_{\substack{Z\subseteq V'\\|Z|=i-1}} \Pi_{V'}(Z) \sum_{\substack{E\subseteq Z\cup \{x\}\\|E|=j}}\phi(E)
    = \underbrace{z \sum_{\substack{Z\subseteq V'\\|Z|=i-1}} \Pi_{V'}(Z)
    \sum_{\substack{E\subseteq Z\cup \{x\}\\|E|=j\\x\notin
  E}}\phi(E)}_{\alpha'}
    \:+\: \underbrace{z \sum_{\substack{Z\subseteq V'\\|Z|=i-1}}
      \Pi_{V'}(Z) \sum_{\substack{E\subseteq Z\cup \{x\}\\|E|=j\\x\in
    E}}\phi(E)}_{\beta'}.
\]
Again with $\alpha'$ and~$\beta'$ two terms above, we have:
\[
  \alpha' = z \sum_{\substack{Z\subseteq V'\\|Z|=i-1}} \Pi_{V'}(Z) \sum_{\substack{E\subseteq Z\\|E|=j}}\phi_{-x}(E)
     = z \times \ennv_{i-1,j}(\phi_{-x}),
   \]
and
\[
  \beta' = z \sum_{\substack{Z\subseteq V'\\|Z|=i-1}} \Pi_{V'}(Z) \sum_{\substack{E\subseteq Z\\|E|=j-1}}\phi(E\cup \{x\})
     = z \sum_{\substack{Z\subseteq V'\\|Z|=i-1}} \Pi_{V'}(Z) \sum_{\substack{E\subseteq Z\\|E|=j-1}}\phi_{+x}(E)
     = z \times \ennv_{i-1,j-1}(\phi_{+x}).
   \]

   Putting it all together, we indeed obtain Equation~\eqref{eq:simulate-conditioning}, thus concluding the proof.
\end{proof}

The following lemma contains the most technical part of the proof of
Theorem~\ref{thm:ev-to-escore}. It is proved using polynomial interpolation
with carefully crafted probability values.

\begin{lemma}
  \label{lem:evs-to-envss}
We have $\envss(\calF) \tr \evs(\calF)$ for any~$\calF$.
\end{lemma}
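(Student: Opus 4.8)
The plan is to show that $\ennv_{k,\ell}(\phi)$ is the coefficient of $t^{k-\ell}$ in an explicit univariate polynomial $P$ of degree at most $|V|-\ell$, and that each evaluation $P(t_0)$ at a rational $t_0\geq 0$ can be obtained, up to an easily computable scalar, from a single call to the $\evs(\calF)$ oracle on $\phi$ with suitably rescaled probabilities. Then $\ennv_{k,\ell}(\phi)$ follows by polynomial interpolation. We may assume $k\geq \ell$, since otherwise $\ennv_{k,\ell}(\phi)=0$.

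First I would exchange the two summations in the definition of $\ennv_{k,\ell}$, grouping by the inner set $E$: $\ennv_{k,\ell}(\phi)=\sum_{E\subseteq V,\,|E|=\ell}\phi(E)\sum_{Z:\,E\subseteq Z\subseteq V,\,|Z|=k}\Pi_V(Z)$. Writing $Z=E\cup W$ with $W\subseteq V\setminus E$ and $|W|=k-\ell$, the inner sum factors as $\big(\prod_{y\in E}p_y\big)$ times the probability that exactly $k-\ell$ of the variables of $V\setminus E$ are selected, which by the usual probability‑generating‑function identity equals $[t^{k-\ell}]\prod_{y\in V\setminus E}(1-p_y+p_y t)$. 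Hence $\ennv_{k,\ell}(\phi)=[t^{k-\ell}]P(t)$ where $P(t)\defeq\sum_{E\subseteq V,\,|E|=\ell}\phi(E)\big(\prod_{y\in E}p_y\big)\prod_{y\in V\setminus E}(1-p_y+p_y t)$, a polynomial of degree at most $|V|-\ell$ with rational coefficients.

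The crucial step is to evaluate $P$ at concrete points using the oracle. Fix any rational $t_0\geq 0$ and set $\tilde p_y\defeq\frac{p_y}{1+p_y t_0}$ for each $y\in V$. Since $p_y\in[0,1]$ and $t_0\geq 0$ we have $1+p_y t_0\geq 1$, so each $\tilde p_y$ lies in $[0,1]$ and is a rational of polynomial bit‑size; thus $(\phi,(\tilde p_y)_y,\ell)$ is a legal input to $\evs(\calF)$. Moreover $1-\tilde p_y=\frac{1-p_y+p_y t_0}{1+p_y t_0}$, so for every $E$ with $|E|=\ell$ we get $\prod_{y\in E}\tilde p_y\prod_{y\in V\setminus E}(1-\tilde p_y)=\frac{\prod_{y\in E}p_y\,\prod_{y\in V\setminus E}(1-p_y+p_y t_0)}{\prod_{y\in V}(1+p_y t_0)}$; summing against $\phi(E)$ yields $P(t_0)=\big(\prod_{y\in V}(1+p_y t_0)\big)\cdot\ev_\ell(\phi)$, where the expected value of fixed size $\ell$ is computed using the probabilities $\tilde p_y$ in place of $p_y$. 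The scalar $\prod_{y\in V}(1+p_y t_0)$ is computable in polynomial time, so each oracle call yields $P(t_0)$ exactly.

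Finally I would invoke the oracle at the $|V|+1$ distinct points $t_0=0,1,\dots,|V|$ (all nonnegative, hence admissible), obtain $P(0),\dots,P(|V|)$, and Lagrange‑interpolate to recover all coefficients of $P$ exactly over $\QQ$ in polynomial time (the degree and all bit‑sizes involved remain polynomial). Reading off the coefficient of $t^{k-\ell}$ gives $\ennv_{k,\ell}(\phi)$, establishing $\envss(\calF)\tr\evs(\calF)$. The only genuine difficulty is discovering the substitution $\tilde p_y=p_y/(1+p_y t)$ that simultaneously turns each evaluation of $P$ into a scalar multiple of an $\evs$‑instance and keeps all probabilities within $[0,1]$ and of polynomial size; the remaining generating‑function manipulation and the interpolation are routine.
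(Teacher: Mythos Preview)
Your proof is correct and follows essentially the same strategy as the paper: swap the two sums so that the inner sum ranges over supersets $Z\supseteq E$, rescale the probabilities so that a single $\ev_\ell$ oracle call (times an easily computable scalar) yields one evaluation of a univariate polynomial whose coefficients are the values $\ennv_{j,\ell}(\phi)$, and recover those coefficients by interpolation. The only cosmetic difference is the choice of rescaling: the paper sets $p^{z}_x=\frac{z p_x}{2zp_x+1-p_x}$ and lands on the polynomial $\sum_j z^{j}\,\ennv_{j,\ell}(\phi)$ directly, whereas your choice $\tilde p_y=\frac{p_y}{1+p_y t}$ gives the shifted polynomial $P(t)=\sum_{j\geq\ell} t^{\,j-\ell}\,\ennv_{j,\ell}(\phi)$; the two are related by $z^{\ell}P(z)$ and carry the same information, and both keep the modified probabilities in $[0,1]$ with polynomial bit-size.
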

\begin{proof}
  Let $\phi\in \calF$ over variables~$V$, probability values~$p_x$ for each
  $x\in V$, and~$k,\ell \in [|V|]$.  Let~$n\defeq
  |V|$. Our goal is to compute~$\ennv_{k,\ell}(\phi)$. We will in fact
  use polynomial interpolation to compute \emph{all} the values~$\ennv_{j,\ell}(\phi)$
  for~$j \in [n]$, and return $\ennv_{k,\ell}(\phi)$.

  Let~$z_0,\ldots,z_{n}$ be~$n + 1$ distinct positive values in~$\QQ$.
  For~$i\in [n]$ and~$x\in V$, define~$c^{z_i}_x \defeq
  2{z_i}p_x + 1 - p_x$ and $p^{z_i}_x \defeq \frac{z_i p_x}{c^{z_i}_x}$,
  and define~$\Pi^{z_i}$ and~$\ev^{z_i}(\phi)$ as expected. Notice that these
  are all valid probability mappings, i.e., all values~$p^{z_i}_x$ are well-defined
  and between $0$ and $1$, and observe that~$1-p^{z_i}_x = \frac{({z_i}p_x) +
  (1-p_x)}{c^{z_i}_x}$. Define further~$C_{z_i} \defeq \prod_{x\in V} c^{z_i}_x$.
  Then:
  \begin{align*}
    \ev^{z_i}_\ell(\phi) &=\sum_{\substack{E\subseteq V\\|E|=\ell}} \Pi^{z_i}_V(E)\phi(E)
     = \sum_{\substack{E\subseteq V\\|E|=\ell}} \phi(E) \prod_{x \in E} p^{z_i}_x \prod_{x \in V\setminus E} (1-p^{z_i}_x)\\
     &= \frac{1}{C_{z_i}}\sum_{\substack{E\subseteq V\\|E|=\ell}} \phi(E) \prod_{x \in E} {z_i}p_x \prod_{x \in V\setminus E} [({z_i}p_x) + (1-p_x)].
  \end{align*}
  Next we develop the innermost product as it is parenthesized and distribute the $\prod_{x \in E} {z_i}p_x$ term, obtaining:
  \begin{align}
    \ev^{z_i}_\ell(\phi) &= \frac{1}{C_{z_i}}\sum_{\substack{E\subseteq V\\|E|=\ell}} \phi(E) \sum_{E\subseteq Z \subseteq V} \prod_{x \in Z} {z_i}p_x \prod_{x \in V\setminus Z} (1-p_x)\nonumber\\
     &= \frac{1}{C_{z_i}}\sum_{\substack{E\subseteq V\\|E|=\ell}} \phi(E) \sum_{j=0}^n \sum_{\substack{E\subseteq Z \subseteq V\\|Z|=j}} \prod_{x \in Z} {z_i}p_x \prod_{x \in V\setminus Z} (1-p_x)\nonumber\\
     &= \frac{1}{C_{z_i}}\sum_{\substack{E\subseteq V\\|E|=\ell}} \phi(E) \sum_{j=0}^n {z_i}^j\sum_{\substack{E\subseteq Z \subseteq V\\|Z|=j}} \Pi_V(Z)\nonumber\\
     &= \frac{1}{C_{z_i}}\sum_{j=0}^n {z_i}^j \sum_{\substack{E\subseteq V\\|E|=\ell}} \phi(E) \sum_{\substack{E\subseteq Z \subseteq V\\|Z|=j}} \Pi_V(Z).\nonumber\\
     &= \frac{1}{C_{z_i}}\sum_{j=0}^{n} {z_i}^{j} \ennv_{j,\ell}(\phi)\label{eq:vdm1},
  \end{align}
  where in the last equality we have inverted the two innermost sums.
  Using the oracle to~$\evs(\calF)$, we compute $c_i \defeq C_{z_i} \times \ev^{z_i}_\ell(\phi)$ for~$i\in
  [n]$ in polynomial time. By
  Equation~\eqref{eq:vdm1}, this gives us a system of linear equations~$A X =
  C$,
  with $C_i \defeq c_i$, $X_j \defeq \ennv_{j,\ell}(\phi)$ and~$A_{ij}
  \defeq {z_i}^j$. We see that~$A$ is a
  non-singular Vandermonde matrix,
  so we can in polynomial time recover all the
  values~$X_j$ and return~$\ennv_{k,\ell}(\phi)$, as promised. This concludes the proof.
\end{proof}

We can finally state the last step of the proof of
Theorem~\ref{thm:ev-to-escore}, again proved using
polynomial interpolation.

\begin{lemma}
  \label{lem:ev-to-evs}
 We have $\evs(\calF) \tr \ev(\calF)$ for any~$\calF$.
\end{lemma}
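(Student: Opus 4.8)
The plan is to reuse the polynomial-interpolation technique of Lemma~\ref{lem:evs-to-envss}, in a slightly simpler one-layer form, since here there is only a single ``size'' parameter to isolate rather than a nested pair. Fix $\phi\in\calF$ over~$V$, probability values~$p_x$, and a target $k\in[n]$ with $n\defeq|V|$; we want $\ev_k(\phi)$, and we will in fact recover all of $\ev_0(\phi),\ldots,\ev_n(\phi)$ at once. The starting observation is that $\ev(\phi)=\sum_{j=0}^n\ev_j(\phi)$, and that rescaling every probability by a parameter~$z$ (before renormalizing) multiplies the size-$j$ slice $\ev_j(\phi)$ by~$z^j$, turning the plain expected value into a polynomial in~$z$ of degree at most~$n$.

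Concretely, for a positive rational~$z$ put $c^z_x\defeq zp_x+1-p_x$, $p^z_x\defeq zp_x/c^z_x$, and $C_z\defeq\prod_{x\in V}c^z_x$, and define $\Pi^z$ and $\ev^z(\phi)$ as expected. Since $c^z_x$ is a convex combination of the positive numbers $z$ and~$1$, it is positive, and $0\le zp_x\le c^z_x$, so each $p^z_x$ lies in $[0,1]$; moreover $1-p^z_x=(1-p_x)/c^z_x$. A direct computation then gives
\[
  C_z\times\ev^z(\phi)=\sum_{Z\subseteq V}\phi(Z)\prod_{x\in Z}zp_x\prod_{x\in V\setminus Z}(1-p_x)=\sum_{Z\subseteq V}\phi(Z)\,z^{|Z|}\,\Pi_V(Z)=\sum_{j=0}^{n}z^{j}\,\ev_j(\phi),
\]
so $C_z\,\ev^z(\phi)$ is a polynomial in~$z$ of degree at most~$n$ whose coefficients are precisely the quantities we seek. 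Now choose $n+1$ distinct positive integers $z_0,\ldots,z_n$ (say $z_i\defeq i+1$). If $p_x=a_x/b_x$ then $p^{z_i}_x=z_ia_x/(z_ia_x+b_x-a_x)$ has bit-size polynomial in the input, so each call to the $\ev(\calF)$ oracle on~$\phi$ with the probabilities $p^{z_i}_x$ is legitimate; from its answer we compute $c_i\defeq C_{z_i}\times\ev^{z_i}(\phi)$ in polynomial time. By the displayed identity, $(c_i)_{i\in[n]}$ and $(\ev_j(\phi))_{j\in[n]}$ satisfy the linear system $AX=C$ with $A_{ij}\defeq z_i^{\,j}$, $X_j\defeq\ev_j(\phi)$, $C_i\defeq c_i$; the matrix~$A$ is a non-singular Vandermonde matrix since the~$z_i$ are distinct, so Gaussian elimination over~$\QQ$ recovers all the $\ev_j(\phi)$ in polynomial time, and we return $\ev_k(\phi)$.

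I do not expect a genuine obstacle here: the argument is essentially a stripped-down version of Lemma~\ref{lem:evs-to-envss}. The only points needing care are the routine sanity checks — that every $p^{z_i}_x$ is a well-defined probability in $[0,1]$ (immediate from the positivity of $c^{z_i}_x$, which in particular never vanishes) and that all numbers handed to or returned from the oracle, as well as the entries of the Vandermonde system, have bit-size polynomial in the input (guaranteed by taking the~$z_i$ to be small integers) — together with the standard fact that exact Gaussian elimination on a non-singular rational matrix runs in polynomial time.
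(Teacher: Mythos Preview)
Your proof is correct and essentially identical to the paper's: you use the same reparametrization $p^z_x = zp_x/(zp_x+1-p_x)$, derive the same identity $C_z\,\ev^z(\phi)=\sum_{j=0}^n z^j\ev_j(\phi)$, and solve the resulting Vandermonde system. The only cosmetic difference is that you fix concrete values $z_i=i+1$ and spell out the bit-size and well-definedness checks, whereas the paper leaves these implicit.
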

\begin{proof}
  Let $\phi\in \calF$ over variables~$V$, probability values~$p_x$ for each
  $x\in V$, and~$k\in [|V|]$. Let~$n\defeq |V|$. We wish to
  compute~$\ev_k(\phi)$. We use again polynomial interpolation to compute
  all the values~$\ev_j(\phi)$ for~$j\in [n]$ and return~$\ev_k(\phi)$.

  Let~$z_0,\ldots,z_{n}$ be~$n+1$ distinct positive values in~$\QQ$.
  For~$i\in [n]$ and~$x\in V$, define~$c^{z_i}_x \defeq 1-p_x + z_i p_x$, define~$p^{z_i}_x \defeq
  \frac{z_i p_x}{c^{z_i}_x}$, and define~$\Pi^{z_i}$ and~$\ev^{z_i}(\phi)$ as expected.
  Again, these are all valid probability mappings, and observe that this time~$1-p^{z_i}_x = \frac{1 - p_x}{c^{z_i}_x}$.
  Defining as before~$C_{z_i} \defeq \prod_{x\in V} c^{z_i}_x$,
  it is this time much easier to derive the equality
  $\ev^{z_i}(\phi) = \frac{1}{C_{z_i}} \sum_{j=0}^n {z_i}^j \ev_j(\phi)$:
   \begin{align*}
     \ev^{z_i}(\phi) &\defeq \sum_{Z\subseteq V} \Pi^{z_i}_V(Z)\phi(Z)\nonumber = \sum_{j=0}^n \sum_{\substack{Z \subseteq V\\|Z|=j}} \Pi^{z_i}_V(Z)\phi(Z)\nonumber\\
                 &= \sum_{j=0}^n \sum_{\substack{Z \subseteq V\\|Z|=j}} \phi(Z) \prod_{x \in Z} p^{z_i}_x \prod_{x \in V\setminus Z} (1-p^{z_i}_x)\nonumber = \frac{1}{C_{z_i}} \sum_{j=0}^n \sum_{\substack{Z \subseteq V\\|Z|=j}} \phi(Z) {z_i}^{|Z|} \prod_{x \in Z} p_x \prod_{x \in V\setminus Z} (1-p_x)\nonumber\\
                 &= \frac{1}{C_{z_i}} \sum_{j=0}^n {z_i}^j \sum_{\substack{Z \subseteq V\\|Z|=j}} \phi(Z) \Pi_V(Z)\nonumber = \frac{1}{C_{z_i}} \sum_{j=0}^n {z_i}^j \ev_j(\phi).
   \end{align*}
  We can then conclude just like in the proof of Lemma~\ref{lem:evs-to-envss}.
\end{proof}

% \subsection{From Expected Scores to Expected Values}
\subsection{\rev{From Expected Values to Expected Scores}}
\label{subsec:escore-to-ev}

In this section we show reductions in the other direction for $\cshapley$ and
$\cbanzhaf$, under additional assumptions on the class $\calF$.

\paragraph*{Shapley score.}
Let us call a class of Boolean functions~$\calF$\!
\intro{reasonable} if the following problem is in $\mathsf{P}$: given as input (a representation of)~$\phi \in
\calF$, compute~$\phi(\emptyset)$. It is clear that all classes mentioned in
this paper are \kl{reasonable} in that sense. Then, under this assumption:

\begin{proposition}
  \label{prp:eshapley-to-ev}
 We have~$\ev(\calF) \tr
 \eshapley(\calF)$
 for any \kl{reasonable} class~$\calF$\! of Boolean functions.
\end{proposition}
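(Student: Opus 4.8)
The plan is to reduce $\ev(\calF)$ to $\eshapley(\calF)$ by a clever choice of the extra variable $x$ and the probabilities on it. Given an instance of $\ev(\calF)$, namely $\phi:2^V\to\{0,1\}$ with probabilities $p_y$ for $y\in V$, I want to compute $\ev(\phi)$. The key observation is the identity from Equation~\eqref{eq:envss-to-escore} (specialized to $\cshapley$), which expresses $\eshapley(\phi,x)$ as $p_x$ times a linear combination of the differences $\ennv_{k,\ell}(\phi_{+x})-\ennv_{k,\ell}(\phi_{-x})$ with Shapley coefficients $\cshapley(k+1,\ell)$. If I take $x$ to be a variable that \emph{already occurs in the representation} and that $\phi$ genuinely depends on in a simple way — or better, if the class is not closed under anything, I cannot do that — so instead I will pick a fresh-looking variable. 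But $\calF$ need not be closed under conjunction/disjunction with fresh variables either; the \kl{reasonable} assumption only gives me $\phi(\emptyset)$. Hence the reduction must call the $\eshapley$ oracle on $\phi$ itself, choosing the oracle's distinguished variable $x$ to be an \emph{existing} variable of $V$, and exploiting freedom in the probability values $p_y$.

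Concretely, here is the approach I would carry out. Enumerate $V=\{x_1,\dots,x_n\}$. For the oracle call with distinguished variable $x=x_m$, set the probability of $x_m$ to some value $q$ and keep $p_y$ on the others; by Equation~\eqref{eq:envss-to-escore}, $\eshapley(\phi,x_m)$ is $q$ times a quantity independent of $q$, so dividing by $q$ (take $q=1$, say, or $q=1/2$) I recover $R_m \defeq \sum_{k,\ell}\cshapley(k+1,\ell)\big(\ennv_{k,\ell}(\phi_{+x_m})-\ennv_{k,\ell}(\phi_{-x_m})\big)$ over the remaining variables $V\setminus\{x_m\}$. The main step is then to relate $\sum_{m=1}^n R_m$ — or a suitable weighted sum over $m$, and over several choices of probabilities — to $\ev(\phi)$. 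The natural route is the classical telescoping that underlies the Shapley efficiency axiom: summing the marginal-contribution differences over all players, the Shapley coefficients are exactly calibrated so that $\sum_{m} \score_{\cshapley}(\psi,W,x_m) = \psi(W)-\psi(\emptyset)$ for any $\psi$ on a fixed ground set $W$. Lifting this to the expected/probabilistic version: summing $\escore_{\cshapley}(\phi,x_m)$ over all $m\in[n]$ should telescope to $\sum_{Z\subseteq V}\Pi_V(Z)\big(\phi(Z)-\phi(\emptyset)\big)\cdot(\text{something})$; more care is needed because the inner ground set $Z$ varies, so I expect the correct statement to be $\sum_{m=1}^n \eshapley(\phi,x_m) = \ev(\phi) - \big(\sum_{Z\subseteq V}\Pi_V(Z)\big)\phi(\emptyset) = \ev(\phi)-\phi(\emptyset)$, using $\sum_Z\Pi_V(Z)=1$. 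Since $\phi(\emptyset)$ is computable in $\mathsf{P}$ by \kl{reasonableness}, one oracle call per variable then yields $\ev(\phi)=\phi(\emptyset)+\sum_{m=1}^n\eshapley(\phi,x_m)$.

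So the steps, in order, are: (1) invoke Equation~\eqref{eq:envss-to-escore} for $c=\cshapley$ to see $\eshapley(\phi,x_m)$ as a sum of marginal contributions weighted by $\cshapley$; (2) prove the probabilistic efficiency identity $\sum_{m=1}^{n}\escore_{\cshapley}(\phi,x_m)=\ev(\phi)-\phi(\emptyset)$ by expanding $\escore_{\cshapley}(\phi,x_m)=\sum_{Z\ni x_m}\Pi_V(Z)\,\score_{\cshapley}(\phi,Z,x_m)$, swapping the order of summation to fix $Z$ first, and applying the finite-ground-set efficiency identity $\sum_{x\in Z}\score_{\cshapley}(\phi,Z,x)=\phi(Z)-\phi(\emptyset)$ inside each $Z$ — noting that the players $x_m\notin Z$ contribute $0$ by definition; (3) compute $\phi(\emptyset)$ using \kl{reasonableness}, make the $n$ oracle calls to $\eshapley(\calF)$ (one per variable of $V$, on the same $\phi$ and $p$), and output the sum plus $\phi(\emptyset)$. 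The main obstacle is step (2): one must be careful that the expected Shapley value sums over subsets $Z$ containing $x_m$ with the \emph{fixed} global distribution $\Pi_V$ (not a conditional one), and that the inner Shapley computation is with respect to ground set $Z$, not $V$; the bookkeeping of which $\Pi_V(Z)$ factors out and the verification that the classical per-set efficiency identity applies uniformly is where all the real work lies, though it is ultimately a short exchange-of-summation argument. Everything else — the single division by $q$, the polynomial bound on the number of oracle calls, rationality of the numbers involved — is routine.
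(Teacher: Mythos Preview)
Your proposal is correct and arrives at essentially the same argument as the paper: the key identity $\sum_{m=1}^{n}\eshapley(\phi,x_m)=\ev(\phi)-\phi(\emptyset)$, proved by swapping the order of summation and applying classical Shapley efficiency on each fixed $Z$, then using \kl{reasonableness} to compute $\phi(\emptyset)$. Your step~(1) invoking Equation~\eqref{eq:envss-to-escore} and the discussion of dividing by~$q$ are unnecessary detours---the paper works directly from the definition of $\eshapley$---but they do no harm, and your steps~(2)--(3) match the paper's proof exactly.
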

\begin{proof}
  For $\phi:2^Z\to \{0,1\}$, it is well known that the
   \emph{efficiency property} holds:

   \(\sum_{x\in Z} \shapley(\phi,Z,x) = \phi(Z) - \phi(\emptyset).\)
 Let then $\phi \in \calF$ over variables~$V$ and probability values~$p_x$ for
 each $x\in V$; our goal is to compute~$\ev(\phi)$. We have:
  \begin{align}
    \sum_{x\in V} \eshapley(\phi,x) &= \sum_{x\in V} \sum_{\substack{Z\subseteq V\\ x\in Z}} \Pi_V(Z) \times \shapley(\phi,Z,x)\nonumber\\
                                    &= \sum_{Z\subseteq V} \sum_{x\in Z} \Pi_V(Z) \times \shapley(\phi,Z,x)\nonumber\\
                                    &= \sum_{Z\subseteq V}  \Pi_V(Z) \big[\phi(Z) - \phi(\emptyset)\big]\nonumber\\
                                    &= \ev(\phi) - \phi(\emptyset).\label{eq:eefficiency}
  \end{align}
  \rev{We note this is what \cite[Axiom 9]{borkotokey2023expected} calls
    \emph{expected efficiency}, with the slight
  difference that they only consider cases in which $\phi(\emptyset) = 0$.}
  We can compute the left-hand size in $\mathsf{P}$ using oracle calls, and we can
  compute~$\phi(\emptyset)$ in $\mathsf{P}$ as well because~$\calF$ is
  \kl{reasonable},
  therefore we can compute~$\ev(\phi)$ in $\mathsf{P}$ indeed. This concludes the
  proof.
\end{proof}

\begin{example}
    \rev{The sum of all expected
    Shapley values in Table~\ref{tab:shap-eshap} is $0.584$; as $\phiex(\emptyset)=0$, this is exactly
  $\ev(\phiex)$ computed in Example~\ref{exa:ev}.}
\end{example}

This implies, for instance,
that $\eshapley(\calF)$ is intractable over arbitrary circuits,
even monotone bipartite 2-DNF formulas~\cite{provan1983complexity}.
Combining with Theorem~\ref{thm:ev-to-escore}, we obtain in particular:
\begin{corollary}
  \label{cor:equiv}
  We have~$\eshapley(\calF) \equivt \ev(\calF)$ for any \kl{reasonable}
  class $\calF$\!
of Boolean functions.
\end{corollary}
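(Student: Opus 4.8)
The plan is to observe that this corollary is an immediate consequence of the two results already established in this subsection and the previous one, so the only work is to check that both reductions apply to the coefficient function $\cshapley$ and to any \kl{reasonable} class. I would therefore prove the two directions $\eshapley(\calF) \tr \ev(\calF)$ and $\ev(\calF) \tr \eshapley(\calF)$ separately and then invoke the definition of $\equivt$.

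For the direction $\eshapley(\calF) \tr \ev(\calF)$, I would apply Theorem~\ref{thm:ev-to-escore}, which gives $\escore_c(\calF) \tr \ev(\calF)$ for \emph{any} \kl{tractable} \kl{coefficient function}~$c$ and \emph{any} class~$\calF$. The one thing to recall here is that $\cshapley$ is indeed \kl{tractable}: as noted in the text, $\cshapley(k,\ell) = \binom{k-1}{\ell}^{-1}k^{-1}$ can be computed in polynomial time when $k$ and $\ell$ are given in unary, since the relevant binomial coefficient is computable in time $O(k\times\ell)$ by dynamic programming. Hence $\eshapley(\calF) = \escore_{\cshapley}(\calF) \tr \ev(\calF)$ with no further assumption on $\calF$.

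For the converse direction $\ev(\calF) \tr \eshapley(\calF)$, I would simply cite Proposition~\ref{prp:eshapley-to-ev}, whose hypothesis is exactly that $\calF$ is \kl{reasonable} — which is the hypothesis of the corollary. Combining the two reductions yields $\eshapley(\calF) \equivt \ev(\calF)$, as claimed.

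There is essentially no obstacle here: the corollary is a packaging of Theorem~\ref{thm:ev-to-escore} and Proposition~\ref{prp:eshapley-to-ev}, and the only point that warrants an explicit sentence is the verification that $\cshapley$ meets the tractability requirement of Theorem~\ref{thm:ev-to-escore}. If anything, the subtle ingredients (polynomial interpolation with crafted probabilities, the $\envss$/$\evs$ intermediate problems, the efficiency property of the Shapley value) have all been discharged in the lemmas leading to those two statements, so the corollary's proof is a one-paragraph assembly.

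\begin{proof}
The coefficient function $\cshapley$ is \kl{tractable}, as explained after the
statement of Theorem~\ref{thm:ev-to-escore}. Hence that theorem gives
$\eshapley(\calF) = \escore_{\cshapley}(\calF) \tr \ev(\calF)$ for any class
$\calF$ of Boolean functions. Conversely, when $\calF$ is \kl{reasonable},
Proposition~\ref{prp:eshapley-to-ev} gives $\ev(\calF) \tr \eshapley(\calF)$.
Combining the two reductions yields $\eshapley(\calF) \equivt \ev(\calF)$.
\end{proof}
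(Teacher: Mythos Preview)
Your proposal is correct and takes essentially the same approach as the paper: the corollary is simply the combination of Theorem~\ref{thm:ev-to-escore} (using that $\cshapley$ is \kl{tractable}) and Proposition~\ref{prp:eshapley-to-ev}. One tiny quibble: the tractability of $\cshapley$ is explained \emph{before} the statement of Theorem~\ref{thm:ev-to-escore}, not after it.
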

Hence, at least with respect to polynomial-time computability, this settles the
complexity of $\eshapley(\calF)$ for such classes.

\paragraph*{Banzhaf score.}
Next, we show a similar result for the Banzhaf value, under a different,
though commonplace, assumption.

\begin{definition}
  A class~$\calF$ is said to be \intro{closed under conditioning} if the
  following problem is in $\mathsf{P}$: given $\phi\in\calF$ over variables~$V$ and~$x\in V$,
  compute a representation in~$\calF$ of $\phi_{+x}$.
We say
  $\calF$ is \intro{closed under conjunctions (resp., disjunctions)
  with fresh variables} if the following is in $\mathsf{P}$: given $\phi\in \calF$ over
  variables $V$ and $x\notin V$, compute a representation in $\calF$ of the
  Boolean function $\phi \land x$ (resp., $\phi \lor x$).
\end{definition}

\begin{proposition}
  \label{prp:ebanzhaf-to-ev}
 We have~$\ev(\calF) \tr
 \ebanzhaf(\calF)$
 for any class~$\calF$ that is \kl{closed under conditioning} and that is also
 \kl{closed under either conjunctions or disjunctions with fresh
 variables}.
\end{proposition}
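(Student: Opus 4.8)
The plan is to mimic the structure of Proposition~\ref{prp:eshapley-to-ev}, but since the Banzhaf value has no efficiency property summing to $\ev(\phi)-\phi(\emptyset)$ in such a clean way, I would instead exploit the two closure assumptions to build an auxiliary function whose expected Banzhaf score at a fresh variable directly exposes $\ev(\phi)$. Concretely, suppose $\calF$ is closed under conjunctions with fresh variables (the disjunction case is symmetric, swapping the roles of $0$ and $1$, or handled by an analogous computation). Given $\phi\in\calF$ over $V$ and probabilities $p_x$, let $x$ be a fresh variable, set $\psi\defeq\phi\land x$, which is in $\calF$ by assumption, and assign $x$ some probability $p_x$ to be chosen later (say $p_x=1/2$). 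The key observation is that for $\psi=\phi\land x$, for any $Z\subseteq V\cup\{x\}$ with $x\in Z$ and any $E\subseteq Z\setminus\{x\}$, we have $\psi(E\cup\{x\})-\psi(E)=\phi(E)-0=\phi(E\cap V)$, since $\psi$ is false on any set not containing $x$. Hence the marginal contribution of $x$ is just $\phi$ restricted to $V$, and $\banzhaf(\psi,Z,x)=\sum_{E\subseteq Z\setminus\{x\}}\phi(E)$.

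The next step is to expand $\ebanzhaf(\psi,x)=\sum_{Z\ni x}\Pi_{V\cup\{x\}}(Z)\banzhaf(\psi,Z,x)$ using this identity. Writing $Z=Z'\cup\{x\}$ with $Z'\subseteq V$, this becomes $p_x\sum_{Z'\subseteq V}\Pi_V(Z')\sum_{E\subseteq Z'}\phi(E)$, which is exactly $p_x\sum_{k=0}^{|V|}\sum_{\ell=0}^k\ennv_{k,\ell}(\phi)$ in the notation of the paper — not literally $\ev(\phi)$, but a weighted sum of nested expected values. So a single oracle call does not immediately give $\ev(\phi)$; I would need to either (a) invoke closure under conditioning to also evaluate $\ebanzhaf$ on the conditioned functions and extract the pieces, or, more cleanly, (b) combine this with the machinery already proved, namely $\envss(\calF)\tr\evs(\calF)\tr\ev(\calF)$, going the other way. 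Actually the cleanest route: recall from Lemma~\ref{lem:envss-to-escore}'s computation (Equation~\eqref{eq:envss-to-escore}) that expected scores are expressible via $\ennv_{k,\ell}$ of conditioned functions; here closure under conditioning lets us stay inside $\calF$ when forming $\phi_{+x},\phi_{-x}$, and closure under conjunction with a fresh variable lets us realize, after tuning probabilities, a system of linear equations in the $\ennv_{k,\ell}(\phi)$ whose right-hand sides are computable $\ebanzhaf$ values. Solving that system (again a Vandermonde-type interpolation, as in Lemmas~\ref{lem:evs-to-envss} and~\ref{lem:ev-to-evs}, by varying $p_x$ over enough distinct rational values) recovers all $\ennv_{k,\ell}(\phi)$, and in particular $\sum_k\ennv_{k,k}(\phi)=\ev(\phi)$.

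The main obstacle I anticipate is exactly the gap noted above: unlike Shapley, the Banzhaf marginals do not telescope, so one oracle call yields only a single linear combination of the nested quantities $\ennv_{k,\ell}(\phi)$ rather than $\ev(\phi)$ on the nose. Overcoming it requires manufacturing enough independent equations — by choosing $\psi$ cleverly (e.g. $\phi\land x$ versus first conditioning then conjoining) and by sweeping the probability of the fresh variable through $|V|+1$ distinct values to set up a nonsingular Vandermonde system — while checking that every intermediate function stays in $\calF$, which is precisely what the two closure hypotheses guarantee. A secondary point to verify carefully is the disjunction case: there $\phi\lor x$ has $\psi(E\cup\{x\})-\psi(E)=1-\phi(E)$, so the same argument runs with $\phi$ replaced by its negation $1-\phi$, and since $\ev(1-\phi)=1-\ev(\phi)$ this is harmless. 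I would present the conjunction case in full and remark that the disjunction case is symmetric.
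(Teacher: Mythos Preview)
Your opening move is exactly right: form $\psi=\phi\land x$ with $x$ fresh, and you correctly arrive at
\[
\ebanzhaf(\psi,x)=p_x\sum_{Z'\subseteq V}\Pi_V(Z')\sum_{E\subseteq Z'}\phi(E).
\]
But you then undersell what you have obtained. This inner double sum is precisely the quantity the paper isolates as the intermediate problem $\env(\phi)$, the \emph{expected nested value}. Setting $p_x=1$, a \emph{single} oracle call to $\ebanzhaf$ on $\psi$ returns $\env(\phi)$ on the nose; in the disjunction case the same call returns $\env(1)-\env(\phi)$ with $\env(1)=\prod_{y\in V}(1+p_y)$, which is just as good. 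There is no need to recover all the individual $\ennv_{k,\ell}(\phi)$, and your appeal to the chain $\envss\tr\evs\tr\ev$ is in the wrong direction anyway: that chain was proved for Theorem~\ref{thm:ev-to-escore}, whereas here you need $\ev\tr(\text{something})$.

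The concrete gap in your plan is the interpolation step. You propose to ``sweep the probability of the fresh variable through $|V|+1$ distinct values'' to obtain a Vandermonde system, but your own formula shows $\ebanzhaf(\psi,x)=p_x\cdot\env(\phi)$ depends on $p_x$ only as a scalar factor, so varying $p_x$ produces linearly \emph{dependent} equations and recovers nothing beyond $\env(\phi)$. The interpolation that actually works is on the probabilities of the \emph{original} variables. The paper factors the proof as (i)~$\env(\calF)\tr\ebanzhaf(\calF)$ via the one-call fresh-variable trick above (using only closure under conjunction/disjunction with a fresh variable), and (ii)~$\ev(\calF)\tr\env(\calF)$ via polynomial interpolation: rescale each $p_y$ to $p_y^{z_i}\defeq z_i p_y/(1-p_y)$ and show $\env^{z_i}(\phi)=C^{-1}\sum_j z_i^{\,j}\,\ev_j(\phi)$, so the $\ev_j(\phi)$ (and hence $\ev(\phi)=\sum_j\ev_j(\phi)$) fall out of a Vandermonde system. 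Closure under conditioning enters only in step~(ii), and not in the way you suggest: it is used solely to dispose of variables with $p_y=1$ (so that the rescaling is well-defined), not to feed conditioned functions $\phi_{+x},\phi_{-x}$ to the $\ebanzhaf$ oracle.
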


This implies that $\ebanzhaf$ is intractable, for instance, over monotone
$2$-CNFs or monotone 2-DNFs.

Proposition~\ref{prp:ebanzhaf-to-ev} requires more work than
Proposition~\ref{prp:eshapley-to-ev}: we do it in two steps by
introducing (yet) another intermediate problem.

\medskip
\begin{center}
\fbox{\begin{tabular}{lp{10cm}}
    \small{PROBLEM} : & $\env(\calF)$ \hspace{.5cm}(\emph{Expected Nested Value})
\\
{\small INPUT} : & A Boolean function $\phi \in \calF$ over variables~$V$ and
probability values~$p_x$ for each $x\in V$
\\
{\small OUTPUT} : & The quantity $\env(\phi) \defeq \sum_{Z\subseteq V} \Pi_V(Z)\sum_{E\subseteq Z} \phi(E)$
\end{tabular}}
\end{center}

The next two lemmas then imply Proposition~\ref{prp:ebanzhaf-to-ev}.

\begin{lemmarep}
  \label{lem:ebanzhaf-to-esv}
  We have $\env(\calF) \tr \ebanzhaf(\calF)$ for any $\calF$ \kl{closed under
  conjunctions (resp., disjunctions) with fresh variables}.
\end{lemmarep}
\begin{proofsketch}
  First, for $\phi':2^{V'} \to \{0,1\}$ and $x\in V'$, we prove the equation
  \begin{equation}
    \label{eq:banzhaf-easy}
    \ebanzhaf(\phi',x) = p_x [\env(\phi'_{+x}) - \env(\phi'_{-x})].
  \end{equation}
  Let then $\phi:2^V \to \{0,1\}$ be the Boolean function for which we
  want to compute $\env(\phi)$, and let $x\notin V$ be a fresh variable.
  We use the closure property of $\calF$ to compute a representation of $\phi' \defeq \phi \odot x$,
  with $\odot$ being $\land$ or $\lor$ depending on the closure property.
  We then show
  using Equation~\eqref{eq:banzhaf-easy} that $\env(\phi)$ can be recovered from
  the single oracle call $\ebanzhaf(\phi',x)$, with $V' \defeq V\cup
  \{x\}$,
  with same probability values for $y\in V$ and $p_x = 1$.
\end{proofsketch}
\begin{proof}
  Let us first show that for
  $\phi':2^{V'} \to \{0,1\}$ and probability values $p_y$ for $y\in V'$ and
  $x\in V'$ we have the equation claimed in the proof sketch, restated here:
  \begin{equation*}
    \ebanzhaf(\phi',x) = p_x [\env(\phi'_{+x}) - \env(\phi'_{-x})].
  \end{equation*}

  The derivation is similar to that of
  Lemma~\ref{lem:envss-to-escore}, but simpler. Observe that $\escore_{\cbanzhaf}(\phi,x) = A
  - B$, where
  \[
  A = \sum_{\substack{Z\subseteq V'\\ x\in Z}} \Pi_{V'}(Z) \sum_{E
  \subseteq Z\setminus \{x\}}
     \phi'(E\cup \{x\})\qquad
  B = \sum_{\substack{Z\subseteq V'\\ x\in Z}} \Pi_{V'}(Z) \sum_{E
  \subseteq Z\setminus \{x\}}
     \phi'(E).
   \]
Let us focus on~$A$. Letting~$V'' \defeq V'\setminus \{x\}$, notice that these are
the variables over which~$\phi'_{+x}$ is defined. Letting~$n\defeq |V''|$, we have

\begin{align*}
  A &= \sum_{\substack{Z\subseteq V'\\ x\in Z}} \Pi_{V'}(Z) \sum_{E
  \subseteq Z\setminus \{x\}}
  \phi'_{+x}(E)
  = p_x \sum_{Z\subseteq V''} \Pi_{V''}(Z) \sum_{E \subseteq Z}
   \phi'_{+x}(E)
  = p_x \env(\phi'_{+x}).
\end{align*}
We can do the same for $B$ to obtain
\(B = p_x \env(\phi'_{-x}),\)
hence the equation.

We now prove Lemma~\ref{lem:ebanzhaf-to-esv} in the case that~$\calF$ is
\kl{closed under conjunctions with
fresh variables}. Let then~$\phi:2^V\to \{0,1\}$, and probabilities $p_y$ for
  $y\in V$. We want to compute $\env(\phi)$. Since $\calF$ is \kl{closed under
  conjunctions with fresh variables}, let~$x \notin V$ and compute a
  representation of $\phi' \defeq \phi \land x$ in $\calF$. We call the oracle
  to $\ebanzhaf$ on $\phi'$ with same probabilities for $y\in V$ and with $p_x
  \defeq 1$. By the above equation (with $V' =  V\cup \{x\}$)
  this immediately gives us $\env(\phi)$ and concludes.

  For the case when $\calF$ is \kl{closed under disjunctions} with fresh variables we
do the same but with $\phi' \defeq \phi \lor x$: now by
Equation~\eqref{eq:banzhaf-easy} the oracle call returns $\big[\sum_{Z\subseteq
V} \Pi_V(Z) \sum_{E\subseteq Z} 1\big]-\env(\phi)$, which is equal to
$\big[\sum_{Z\subseteq V} \Pi_V(Z) 2^{|Z|}\big]-\env(\phi)$. We conclude
the proof
by showing
that the first term is equal to $\prod_{y\in V}
(1+p_y)$, which can be computed in polynomial time, hence we can indeed recover
$\env(\phi)$.
  Indeed, let~$n\defeq
  |V|$, and order the variables of~$V$ arbitrarily as $y_1,\ldots,y_n$. For
  $i\in [n]$, define~$V_i\defeq \{y_j \mid 1\leq j \leq i \in [n]\}$ (note that
  $V_0=\emptyset$), and $d_i \defeq \sum_{Z\subseteq V_i} \Pi_{V_i}(Z)
  2^{|Z|}$. Observe that the quantity that we want is $d_{n}$. But it is clear
  that $d_0 = 1$ and that $d_i =
   \sum_{\substack{Z\subseteq V_i\\y_i\notin Z}} \Pi_{V_i}(Z) 2^{|Z|}
  +\sum_{\substack{Z\subseteq V_i\\y_i\in    Z}} \Pi_{V_i}(Z) 2^{|Z|}
  = (1-p_{y_i})d_{n-1}+p_{y_i}\times d_{n-1}\times 2
  = (1+p_{y_i})d_{n-1}$ for $1 \leq i \leq n$,
  which concludes.
\end{proof}

\begin{lemmarep}
  \label{lem:env-to-ev}
  We have $\ev(\calF) \tr \env(\calF)$ for any class $\calF$ that is
  \kl{closed under conditioning}.
\end{lemmarep}
\begin{proofsketch}
  This is again a rather technical proof by polynomial interpolation, in which
  we curiously seem to need the assumption of closure under conditioning.
\end{proofsketch}
\begin{proof}
  Let
  $\phi \in \calF$ over variables~$V$ with $n\defeq |V|$ and probability
  values~$p_x$ for each $x\in V$; we want to compute $\ev(\phi)$. We use
  polynomial interpolation to compute all the values $\ev_j(\phi)$ for
  $j\in [n]$, after which we can simply return $\sum_{j=0}^n \ev_j(\phi) = \ev(\phi)$.

  Without loss of generality, we can
  assume that~$p_x < 1$ for all $x\in V$. Indeed, if there is $x$ such
  that $p_x=1$, we consider $V'=V\setminus\{x\}$ and $\phi'=\phi_{+x}$.
  Then $\ev_j(\phi)=\ev_{j-1}(\phi')$ for any $j\geq 1$ and
  $\ev_0(\phi)=0$. This is indeed without loss of generality because~$\calF$ is
  closed under conditioning, so that $\phi_{+x}$ is in $\calF$.

Let~$M\defeq \max_{x\in V}p_x<1$.
  Let~$z_0,\ldots,z_{n}$ be~$n+1$ distinct rational values in~$(0,1-M)$.
  For~$i\in [n]$ and~$x\in V$, we define this time $p^{z_i}_x \defeq
  \frac{z_i p_x}{1-p_x}$, and define~$\Pi^{z_i}$ and~$\ev^{z_i}(\phi)$ as expected.
  Again, these are all valid probability mappings.
  Define~$C \defeq \prod_{x\in V} (1-p_x)$. We will show
  that we have $\env^{z_i}(\phi) = \frac{1}{C} \sum_{j=0}^n z_i^j \ev_j(\phi)$,
  which allows us to conclude as in the proof of Lemma~\ref{lem:evs-to-envss}.
  Indeed:
  \begin{align*}
    \env^{z_i}(\phi) &= \sum_{Z\subseteq V} \Pi^{z_i}_V(Z) \sum_{E\subseteq Z} \phi(E)\\
                     &= \sum_{E\subseteq V}  \phi(E) \sum_{E\subseteq Z \subseteq V} \Pi^{z_i}_V(Z)
                     = \sum_{E\subseteq V}  \phi(E) \sum_{E\subseteq Z \subseteq V} \prod_{x\in Z} p^{z_i}_x \prod_{x\in V\setminus Z} (1-p^{z_i}_x)\\
                     &= \frac{1}{C} \sum_{E\subseteq V}  \phi(E) \sum_{E\subseteq Z \subseteq V} \prod_{x\in Z} z_i p_x \prod_{x\in V\setminus Z} (1-p_x -z_i p_x)\\
                     &= \frac{1}{C} \sum_{E\subseteq V}  \phi(E) \prod_{x\in E} z_i p_x \prod_{x\in V\setminus E} [(z_i p_x) + (1-p_x -z_i p_x)]\\
                     &= \frac{1}{C} \sum_{E\subseteq V}  \phi(E) \prod_{x\in E} z_i p_x \prod_{x\in V\setminus E} (1-p_x)
                     = \frac{1}{C} \sum_{j=0}^n \sum_{\substack{E\subseteq V\\|E|=j}} z_i^j \Pi_V(E) \phi(E)
                     = \frac{1}{C} \sum_{j=0}^n z_i^j \ev_j(\phi).
                     \qedhere
  \end{align*}
\end{proof}

And thus, combining with Theorem~\ref{thm:ev-to-escore}, we obtain:
\begin{corollary}
  \label{cor:equiv_banzhaf}
We have~$\ebanzhaf(\calF) \equivt \ev(\calF)$ for any class $\calF$
that is \kl{closed under conditioning} and that is also
\kl{closed under either conjunctions or disjunctions with fresh
variables}.
\end{corollary}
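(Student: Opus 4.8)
The plan is to obtain the equivalence by chaining the two directions that have already been established. For the direction $\ebanzhaf(\calF) \tr \ev(\calF)$, I would simply invoke Theorem~\ref{thm:ev-to-escore} instantiated with the coefficient function $c = \cbanzhaf$: since $\cbanzhaf(k,\ell) = 1$ is trivially \kl{tractable} (as already noted just before that theorem), the theorem gives $\ebanzhaf(\calF) = \escore_{\cbanzhaf}(\calF) \tr \ev(\calF)$ for \emph{any} class $\calF$, in particular for the classes in the statement. No use of the closure hypotheses is needed for this direction.

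For the converse direction $\ev(\calF) \tr \ebanzhaf(\calF)$, I would invoke Proposition~\ref{prp:ebanzhaf-to-ev}, whose hypotheses are exactly those of the corollary: $\calF$ is \kl{closed under conditioning} and \kl{closed under either conjunctions or disjunctions with fresh variables}. That proposition is itself obtained by composing Lemma~\ref{lem:env-to-ev} ($\ev(\calF) \tr \env(\calF)$, using closure under conditioning) with Lemma~\ref{lem:ebanzhaf-to-esv} ($\env(\calF) \tr \ebanzhaf(\calF)$, using closure under conjunctions or disjunctions with a fresh variable): introduce a fresh variable $x$, form $\phi' \defeq \phi \odot x$ with $\odot \in \{\land,\lor\}$ according to the available closure property, set $p_x \defeq 1$, and read off $\env(\phi)$ from the single oracle call $\ebanzhaf(\phi',x)$ via Equation~\eqref{eq:banzhaf-easy} (adjusting by the easily computed factor $\prod_{y\in V}(1+p_y)$ in the disjunctive case).

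Combining the two displayed reductions yields $\ebanzhaf(\calF) \equivt \ev(\calF)$, as desired. The only genuinely non-routine ingredient is internal to the pieces being reused: namely the polynomial-interpolation argument of Lemma~\ref{lem:env-to-ev}, where one reparametrizes the probabilities as $p^{z_i}_x \defeq z_i p_x / (1-p_x)$ and extracts the coefficients $\ev_j(\phi)$ from a Vandermonde system, and where — somewhat curiously — the closure-under-conditioning assumption is needed precisely to reduce to the case $p_x < 1$ for all $x$ so that this reparametrization is well defined. At the level of the corollary itself, there is no obstacle: it is a direct composition of Theorem~\ref{thm:ev-to-escore} and Proposition~\ref{prp:ebanzhaf-to-ev}.
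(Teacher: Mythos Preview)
Your proposal is correct and mirrors the paper's approach exactly: the corollary is obtained by combining Theorem~\ref{thm:ev-to-escore} (instantiated with the \kl{tractable} coefficient function $\cbanzhaf$) for the direction $\ebanzhaf(\calF) \tr \ev(\calF)$, and Proposition~\ref{prp:ebanzhaf-to-ev} for the converse. Your additional unpacking of that proposition via Lemmas~\ref{lem:ebanzhaf-to-esv} and~\ref{lem:env-to-ev} is accurate and matches the paper's development.
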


We leave as future work a more systematic (tedious) study of when
$\ev(\calF) \tr \escore_c(\calF)$ holds for other \kl{coefficient
functions}.

\section{DD Circuits}
\label{sec:dD}
We now present algorithms to compute expected Shapley-like scores
in polynomial time over deterministic and decomposable Boolean circuits. Since
computing expected values can be done in linear time over such circuits, the
fact that computing expected Shapley-like scores over them is in
$\mathsf{P}$ is already
implied by our main result, Theorem~\ref{thm:ev-to-escore}. We nevertheless present
more direct algorithms for these circuits as they are easier and more
natural to implement than the convoluted chain of reductions with various
oracle calls and matrix inversions from the previous section. We will moreover
use these algorithms in our experimental evaluation in Section~\ref{sec:exp}.
We start by defining what are these circuits.

\paragraph*{Boolean circuits.} Let~$C$ be a Boolean circuit over variables~$V$,
featuring~$\land$,~$\lor$,~$\lnot$, constant $0$- and $1$-gates, and variable
gates (i.e., gates labeled by a variable in~$V$), \rev{with $\land$- and $\lor$-gates having
arbitrary fan-in greater than $1$}. We
write~$\vars(C)\subseteq V$ the set of variables that occur in the circuit.
The size $|C|$ of~$C$ is its number of wires. For a gate~$g$ of~$C$,
we write~$C_g$ the subcircuit rooted at~$g$, and write
$\vars(g)$ its set of variables.  An $\land$-gate~$g$ of~$C$ is
\emph{decomposable} if for every two input gates $g_1\neq g_2$ of~$g$,
$\vars(g_1) \cap \vars(g_2) = \emptyset$.  We call~$C$ \emph{decomposable}
if all~$\land$-gates in it are.  An $\lor$-gate~$g$ of~$C$ is
\emph{deterministic} if the Boolean functions captured by each pair of distinct
input gates of~$g$ are pairwise disjoint; i.e., there is no assignment that
satisfies them both.  We call~$C$ \emph{deterministic} if all~$\lor$-gates in
it are. A \intro{deterministic and decomposable (d-D) Boolean circuit}~\cite{monet2020solving}
is a Boolean circuit that is both deterministic and
decomposable. An~$\lor$-gate~$g$ is \emph{smooth} if for any input~$g'$ of~$g$
we have~$\vars(g) = \vars(g')$, and $C$ is smooth is all its $\lor$-gates are.
We say that $C$ is \intro{tight} if it satisfies the following three conditions:
(1)~$\vars(C) = V$; (2) $C$ is smooth; and (3) every~$\land$ and every~$\lor$
gate of~$C$ has exactly two children.

\begin{example}
  \rev{The formula $\phiex=(A\land a)\lor(C\land c)$ from our running
    example is not a d-D when interpreted as a circuit, since the $\lor$
    gate is not deterministic. An equivalent d-D is for example
  $\lnot\left[\lnot(A\land a)\land\lnot(C\land c)\right]$, interpreted
as a circuit in the natural way.}
\end{example}

The following is folklore.
\begin{lemma}
  \label{lem:tight}
  Given as input a \kl{d-D circuit}~$C$ over variables~$V$, we can compute in
  $O(|C|\times|V|)$ a \kl{d-D circuit}~$C'$ over~$V$ that is equivalent to~$C$ and that is
  \kl{tight}.
\end{lemma}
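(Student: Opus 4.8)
The plan is to apply three rewriting passes to $C$, each of which preserves the Boolean function computed at \emph{every} pre-existing gate (so that equivalence, determinism, and decomposability are all maintained), and each of which multiplies the size by at most a factor depending on $|V|$. Pass~1 makes all $\land$- and $\lor$-gates binary; Pass~2 makes the circuit smooth; Pass~3 adds the variables of $V$ not occurring in the circuit.

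\textbf{Pass~1 (fan-in reduction).} Replace every $\land$-gate (resp.\ $\lor$-gate) of fan-in $k>2$ with inputs $g_1,\dots,g_k$ by a left-deep cascade of $k-1$ binary $\land$-gates (resp.\ $\lor$-gates) computing $(\cdots(g_1\land g_2)\land\cdots)\land g_k$. Decomposability is preserved because any sub-family of pairwise variable-disjoint sets is again pairwise disjoint, so each binary $\land$-gate of the cascade has variable-disjoint inputs; dually, determinism is preserved because if the function of $g_3$ is disjoint from those of $g_1$ and of $g_2$ then it is disjoint from that of $g_1\lor g_2$. This adds $O(|C|)$ wires in linear time, and after it condition~(3) of tightness holds; it is then maintained by the later passes, which introduce only binary gates.

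\textbf{Pass~2 (smoothing).} First compute $\vars(g)$ for every gate $g$ by one bottom-up traversal, each set stored as a length-$|V|$ bitvector; this costs $O(|C|\times|V|)$. Create once, for each $y\in V$, a binary $\lor$-gate $t_y$ computing $y\lor\lnot y$ (which is deterministic and already smooth). Then, for each $\lor$-gate $g$ and each input $g'$ of $g$ with $\vars(g')\subsetneq\vars(g)$, write $\{y_1,\dots,y_m\}=\vars(g)\setminus\vars(g')$ and reroute the wire $g'\to g$ through a left-deep cascade of binary $\land$-gates computing $g'\land t_{y_1}\land\cdots\land t_{y_m}$. Each $t_{y_i}$ is a tautology, so the function at $g'$, hence at $g$ and at all ancestors, is unchanged, giving equivalence and preservation of determinism globally; and the \emph{union} $\vars(h)$ is unchanged at every gate $h$ (enlarging some inputs' variable sets to their common union does not change that union), so decomposability of all $\land$-gates, old and new, is preserved — for a new gate in a cascade the left part has variables $\vars(g')\cup\{y_1,\dots,y_{i-1}\}$ and the right part $\{y_i\}$, which are disjoint. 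After this pass every $\lor$-gate is smooth; at most $|V|$ gates are added per wire, so the pass runs in $O(|C|\times|V|)$.

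\textbf{Pass~3 (missing variables) and conclusion.} If $W=V\setminus\vars(C)\neq\emptyset$ for the circuit obtained so far, replace the root $r$ by a cascade $r\land\bigwedge_{y\in W}t_y$ of binary $\land$-gates; this is function-preserving, decomposable (the $t_y$ have pairwise disjoint variables, all disjoint from $\vars(r)$), introduces no new $\lor$-gate so smoothness is untouched, and yields $\vars(C')=V$; it costs $O(|V|)$. Combining the three passes gives a tight d-D circuit $C'$ equivalent to $C$ in total time $O(|C|\times|V|)$. The only step needing genuine care is Pass~2: one must observe (i)~that inserting the $y\lor\lnot y$ gadgets cannot break determinism of any $\lor$-gate higher up, which follows because the rewriting is function-preserving at every pre-existing gate, (ii)~that it cannot break decomposability of any $\land$-gate, which follows because $\vars(\cdot)$ is invariant under smoothing, and (iii)~that it stays within the $O(|C|\times|V|)$ budget, i.e., at most $|V|$ new gates per wire — everything else is routine.
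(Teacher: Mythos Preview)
The paper states this lemma as folklore and provides no proof, so there is nothing to compare your argument against. Your three-pass construction (binarize, smooth via shared $y\lor\lnot y$ gadgets inserted on wires into $\lor$-gates, then attach the missing variables at the root) is the standard way to establish this folklore result and is correct, including the key observations that function-preservation at pre-existing gates guarantees determinism is maintained and that invariance of $\vars(\cdot)$ under smoothing guarantees decomposability is maintained.
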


\paragraph*{General polynomial-time algorithm.}
It is thus enough to explain how to compute expected Shapley-like scores for
\kl{tight} \kl{d-Ds}; let~$C$ be such a circuit on variables~$V$. We start from
Equation~\eqref{eq:envss-to-escore}, restated here for convenience:
\begin{align*}
  \escore_c(C,x) &= p_x \sum_{k=0}^{|V|-1} \sum_{\ell=0}^k c(k+1,\ell) \big(\ennv_{k,\ell}(C_{1}) - \ennv_{k,\ell}(C_{0})\big).
\end{align*}

Here, $C_{1}$ (resp.,~$C_{0}$) is the circuit~$C$ in which we have replaced
every variable gate labeled by~$x$ by a constant $1$-gate (resp., a constant
$0$-gate). It can easily be checked that~$C_0$ and~$C_1$ are \kl{tight}
\kl{d-Ds}
over~$V\setminus \{x\}$. Therefore, it suffices to compute, for an arbitrary
\kl{tight} \kl{d-D circuit}~$C'$, the $\envss$ quantities. To do this, we
crucially need the determinism and decomposability properties. The idea is to compute corresponding quantities for each gate of the circuit, in a bottom-up fashion. This is similar
to what is done in \cite[Theorem 2]{arenas2023complexity} and \cite[Proposition
4.4]{deutch2022computing}, but the expressions we obtain are more involved because we have a
quadratic number of parameters for each gate of the circuit, as opposed to a
linear number of such parameters in these earlier works.
The resulting algorithm for the whole procedure is shown in
Algorithm~\ref{alg:dD}. Intuitively, the values~$\beta^g_{k,\ell}$ correspond
to the~$\ennv_{k,\ell}$-values for the subcircuit of~$C_1$ rooted at gate $g$, the values~$\gamma^g_{k,\ell}$ correspond
to those for $C_0$, and $\delta$~values are intermediate
quantities that we have to compute. Thus:
\begin{toappendix}
  \subsection{Proof of Theorem~\ref{thm:dD}}
\end{toappendix}
\begin{theoremrep}
  \label{thm:dD}
  Let $c$ be a \kl{tractable} \kl{coefficient function}.
  Given a \kl{d-D} circuit $C$ on variables~$V$, probability values~$p_y$ for~$y\in
  V$, and~$x\in V$, Algorithm~\ref{alg:dD} correctly computes $\escore_c(C,x)$
  in polynomial time. Moreover, if we ignore the cost of arithmetic
  operations, it is in
  time $O\left(|C|\times|V|^5+\mathrm{T}_c(|V|)\times|V|^2\right)$ where
  $\mathrm{T}_c(\alpha)$ is the cost of
  computing the coefficient function on inputs $\leq\alpha$.
\end{theoremrep}

% \begin{toappendix}
\begin{algorithm}
\caption{Expected Shapley-like scores for deterministic and decomposable Boolean circuits}
\label{alg:dD}
\SetKwBlock{Inductionone}{Compute values~$\delta^g_{k}$ for every gate $g$ in $C$ and~$k\in [n']$ by bottom-up induction on~$C$ as follows:\label{line:begin1}}{end\label{line:end1}}
\SetKwBlock{Inductiontwo}{Compute values~$\beta^g_{k,\ell}$ and~$\gamma^g_{k,\ell}$ for every gate $g$ in $C$ and~$k,\ell \in [n']$ by bottom-up induction on~$C$:\label{line:begin2}}{end\label{line:end2}}
\SetKwInOut{Input}{Input}\SetKwInOut{Output}{Output}
\Input{A \kl{d-D} $C$ on variables~$V$, probability values~$p_y$ for~$y\in
  V$, and~$x\in V$.}
\Output{The value $\escore_c(C,x)$}
\BlankLine
\hrulealg
\BlankLine
\BlankLine
Let $n' = |V|-1$ and let $g_\out$ be the output gate of $C$\;
Make $C$ \kl{tight} using Lemma~\ref{lem:tight}, and call it~$C$ again\;
\BlankLine
\Inductionone{
  \uIf{$g$ is a constant gate or a variable gate with $\vars(g)=\{x\}$}{
        $\delta^g_0 \leftarrow 1$ and $\delta^g_k \leftarrow 0$ for $k\geq 1$\;
    }
    \uElseIf{$g$ is a variable gate with~$\vars(g)=\{y\}$ and $y\neq x$}{
        $\delta^g_0 \leftarrow 1-p_y$, $\delta^g_1 \leftarrow p_y$, and
        $\delta^g_k \leftarrow 0$ for $k\geq 2$\;
    }
    \uElseIf{$g$ is a $\lnot$-gate with input gate $g'$}{
      $\delta^g_k \leftarrow \delta^{g'}_k$ for $k\in [n']$\;
    }

    \uElseIf{$g$ is an $\lor$-gate with input gates $g_1,g_2$}{
      $\delta^g_k \leftarrow \delta^{g_1}_k$ for $k\in [n']$\;
    }
    \uElseIf{$g$ is an $\land$-gate with input gates $g_1,g_2$}{
      $\delta^g_k \leftarrow \sum_{k_1 = 0}^{k}\delta^{g_1}_{k_1} \delta^{g_2}_{k-k_1}$ for $k\in [n']$\;
    }
}
\Inductiontwo{
  \uIf{$g$ is a constant $a$-gate ($a\in \{0,1\}$)}{
    $\beta^g_{0,0}, \gamma^g_{0,0} \leftarrow a$, and
        $\beta^g_{k,\ell}, \gamma^g_{k,\ell} \leftarrow 0$ for $(k,\ell) \neq (0,0)$\;
    }
    \uElseIf{$g$ is a variable gate with~$\vars(g)=\{x\}$}{
    $\beta^g_{0,0} \leftarrow 1$, $\gamma^g_{0,0} \leftarrow 0$, and $\beta^g_{k,\ell}, \gamma^g_{k,\ell} \leftarrow 0$ for $(k,\ell) \neq (0,0)$\;
    }
    \uElseIf{$g$ is a variable gate with~$\vars(g)=\{y\}$ and $y\neq x$}{
    $\beta^g_{0,0}, \beta^g_{1,0}, \gamma^g_{0,0}, \gamma^g_{1,0} \leftarrow 0$,
    $\beta^g_{1,1}, \gamma^g_{1,1} \leftarrow p_x$, and
    $\beta^g_{k,\ell}, \gamma^g_{k,\ell} \leftarrow 0$ for all other values of~$k,\ell$\;
    }
    \uElseIf{$g$ is a $\lnot$-gate with input gate $g'$}{
      $\beta^g_{k,\ell} \leftarrow \binom{k}{\ell} \delta^{g}_k - \beta^{g'}_{k,\ell}$ for $k,\ell\in [n']$\;
      $\gamma^g_{k,\ell} \leftarrow \binom{k}{\ell} \delta^{g}_k - \gamma^{g'}_{k,\ell}$ for $k,\ell\in [n']$\;
    }
    \uElseIf{$g$ is an $\lor$-gate with input gates $g_1,g_2$}{
      $\beta^g_{k,\ell} \leftarrow \beta^{g_1}_{k,\ell} + \beta^{g_2}_{k,\ell}$ for $k,\ell\in [n']$\;
      $\gamma^g_{k,\ell} \leftarrow \gamma^{g_1}_{k,\ell} + \gamma^{g_2}_{k,\ell}$ for $k,\ell\in [n']$\;
    }
    \uElseIf{$g$ is an $\land$-gate with input gates $g_1,g_2$}{
      $\beta^g_{k,\ell} \leftarrow \sum_{k_1 = 0}^{k} \sum_{\ell_1 = 0}^{k_1}\beta^{g_1}_{k_1,\ell_1} \times \beta^{g_2}_{k-k_1,\ell-\ell_1}$ for $k,\ell\in [n']$\;\label{line:and1}
      $\gamma^g_{k,\ell} \leftarrow \sum_{k_1 = 0}^{k} \sum_{\ell_1 = 0}^{k_1}\gamma^{g_1}_{k_1,\ell_1} \times \gamma^{g_2}_{k-k_1,\ell-\ell_1}$ for $k,\ell\in [n']$\;\label{line:and2}
    }
}
\Return ${\displaystyle p_x \sum_{k=0}^{n'} \sum_{\ell=0}^k c(k+1,\ell) \big(\beta^{g_\out}_{k,\ell} - \gamma^{g_\out}_{k,\ell}\big)}$\label{line:return1}\;
\end{algorithm}
% \end{toappendix}

We can show that the number of bits of numerators and denominators of
the~$\beta$, $\gamma$ and~$\delta$ values is roughly
$O(b\times|V|)$, for $b$ the bound on the number of bits of numerators and
denominators of all $p_y$ values. Therefore to obtain the exact complexity,
without ignoring the time to perform additions and multiplications over such
numbers, one has to add an
$O(b\times|V|)$
multiplicative factor.

\begin{toappendix}
  Proving Theorem~\ref{thm:dD}, as explained in
  Section~\ref{sec:dD}, boils down to showing how we can compute, given a
  \kl{tight} \kl{d-D circuit}, the $\envss$ quantities. We then show:

\begin{proposition}
  \label{prp:compute-envss}
  Given as input a \kl{tight} \kl{d-D circuit} $C'$ on variables~$V'$ and probability values~$p_y$
  for~$y\in V'$ we can compute all the values $\ennv_{k,\ell}(C)$ for~$k,\ell \in
  [|V'|]$ in $O(|C'|\times |V'|^4)$, ignoring the cost of arithmetic
  operations.
\end{proposition}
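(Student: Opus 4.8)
The plan is to prove Proposition~\ref{prp:compute-envss} by bottom-up dynamic programming over the gates of the \kl{tight} \kl{d-D circuit}~$C'$, which is essentially the core of Algorithm~\ref{alg:dD} stripped of the special variable~$x$. For a gate~$g$ of~$C'$, write $W_g\defeq\vars(g)$ and let $\phi_g\colon 2^{W_g}\to\{0,1\}$ be the Boolean function that~$g$ computes. For $k,\ell\in[|V'|]$ I would maintain two families of quantities: the local size distribution $\delta^g_k\defeq\sum_{Z\subseteq W_g,\,|Z|=k}\Pi_{W_g}(Z)$, the probability that exactly~$k$ variables of~$W_g$ are selected; and $\nu^g_{k,\ell}\defeq\sum_{Z\subseteq W_g,\,|Z|=k}\Pi_{W_g}(Z)\sum_{E\subseteq Z,\,|E|=\ell}\phi_g(E)$, i.e., $\ennv_{k,\ell}$ of~$\phi_g$ computed over~$W_g$. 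Since~$C'$ is \kl{tight}, the output gate~$g_\out$ satisfies $W_{g_\out}=V'$, so once all $\nu^g_{k,\ell}$ are known the sought values are $\ennv_{k,\ell}(C')=\nu^{g_\out}_{k,\ell}$.

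The base cases are immediate: for a constant $a$-gate ($a\in\{0,1\}$) we have $W_g=\emptyset$, hence $\delta^g_0=1$, $\nu^g_{0,0}=a$, all other entries~$0$; for a variable gate labelled~$y$ we have $W_g=\{y\}$, hence $\delta^g_0=1-p_y$, $\delta^g_1=p_y$, and (since $\phi_g(E)=1$ iff $y\in E$) $\nu^g_{1,1}=p_y$ with all other entries~$0$. For the inductive cases I would establish, using the structural guarantees of tightness, the recurrences already appearing in Algorithm~\ref{alg:dD}. If~$g$ is a $\lnot$-gate with input~$g'$, then $W_g=W_{g'}$ and $\phi_g=1-\phi_{g'}$, so $\delta^g_k=\delta^{g'}_k$ and, using $\sum_{E\subseteq Z,\,|E|=\ell}1=\binom{|Z|}{\ell}$, we get $\nu^g_{k,\ell}=\binom{k}{\ell}\delta^g_k-\nu^{g'}_{k,\ell}$ --- this is exactly where the auxiliary $\delta$-values are needed. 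If~$g$ is an $\lor$-gate with inputs $g_1,g_2$, then smoothness gives $W_g=W_{g_1}=W_{g_2}$ and determinism gives $\phi_{g_1}\phi_{g_2}\equiv 0$, so $\phi_g=\phi_{g_1}+\phi_{g_2}$ pointwise; hence $\delta^g_k=\delta^{g_1}_k$ and $\nu^g_{k,\ell}=\nu^{g_1}_{k,\ell}+\nu^{g_2}_{k,\ell}$ by linearity. If~$g$ is an $\land$-gate with inputs $g_1,g_2$, then decomposability gives $W_g=W_{g_1}\sqcup W_{g_2}$; every $Z\subseteq W_g$ then splits uniquely as $Z=Z_1\sqcup Z_2$ with $Z_i\subseteq W_{g_i}$, every $E\subseteq Z$ splits as $E=E_1\sqcup E_2$, and both $\Pi_{W_g}(Z)=\Pi_{W_{g_1}}(Z_1)\,\Pi_{W_{g_2}}(Z_2)$ and $\phi_g(E)=\phi_{g_1}(E_1)\,\phi_{g_2}(E_2)$ factor; grouping the sum by $k_1\defeq|Z_1|$ and $\ell_1\defeq|E_1|$ gives the two-dimensional convolutions $\delta^g_k=\sum_{k_1=0}^{k}\delta^{g_1}_{k_1}\delta^{g_2}_{k-k_1}$ and $\nu^g_{k,\ell}=\sum_{k_1=0}^{k}\sum_{\ell_1=0}^{\ell}\nu^{g_1}_{k_1,\ell_1}\,\nu^{g_2}_{k-k_1,\ell-\ell_1}$. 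Correctness of the whole procedure then follows by induction on the structure of~$C'$.

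For the running time, note that since~$C'$ is \kl{tight} all fan-ins are at most~$2$ and the number of gates is $O(|C'|)$. At each gate we store $O(|V'|)$ values $\delta^g_k$ and $O(|V'|^2)$ values $\nu^g_{k,\ell}$; at constant, variable, $\lnot$- and $\lor$-gates each is computed with $O(1)$ arithmetic operations, i.e., $O(|V'|^2)$ per gate, while at an $\land$-gate each of the $O(|V'|^2)$ entries $\nu^g_{k,\ell}$ is a sum of $O(|V'|^2)$ products, i.e., $O(|V'|^4)$ per gate. Summing over all gates yields the claimed $O(|C'|\times|V'|^4)$ bound, ignoring the cost of arithmetic; entries with $k>|W_g|$ vanish and may be skipped, but this does not affect the asymptotics.

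I expect the $\land$-gate case to be the main obstacle: one must verify carefully that decomposability makes the choice of~$Z$, the restriction $E\subseteq Z$, the weight $\Pi_{W_g}(Z)$, and the value $\phi_g(E)$ all factor through the partition $W_g=W_{g_1}\sqcup W_{g_2}$, and then that the bookkeeping over pairs $(k_1,\ell_1)$ is precisely a two-dimensional convolution --- this is what generalizes, with one extra dimension, the linear-parameter recurrences of \cite[Proposition~4.4]{deutch2022computing} and \cite[Theorem~2]{arenas2023complexity}, and what drives the $|V'|^4$ factor. A secondary point is a routine check that the tightness hypotheses (smoothness for $\lor$-gates, binary fan-in, $\vars(C')=V'$) are genuinely used and that degenerate configurations --- e.g.\ a gate feeding twice into an $\land$- or $\lor$-gate, or purely constant subcircuits --- are either ruled out or absorbed by determinism and decomposability.
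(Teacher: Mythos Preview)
Your proposal is correct and follows essentially the same approach as the paper: the paper defines the same auxiliary quantities~$\delta^g_k$ and~$\alpha^g_{k,\ell}$ (your~$\nu^g_{k,\ell}$) and proves, via two lemmas, exactly the bottom-up recurrences you state for each gate type, with the $\land$-gate two-dimensional convolution driving the $O(|V'|^4)$-per-gate bound. The only cosmetic difference is that the paper sums $\ell_1$ up to~$k_1$ rather than up to~$\ell$, which is equivalent since the summands vanish outside the common range.
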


Recall that this will be instantiated with~$C'=C_0$ and~$C'=C_1$ for the circuits
$C_0$ and~$C_1$ from Section~\ref{sec:dD} (which should not be confused with
circuit~$C$ of that section). Also note that, even though by
Equation~\eqref{eq:envss-to-escore} we only need to compute the values for~$k\geq
\ell$ (since they are zero when~$k>\ell$), we still do as if we wanted to
naively compute them all. This allows us to obtain cleaner expressions, in which the ranges
for the sums are easier to read. Let us define~$n' \defeq |V'|$.

We first explain how to compute an intermediate quantity that will be needed later.

\begin{definition}
  For a gate~$g \in C'$ and integer~$k \in [n']$, define $\delta^g_k \defeq
  \sum_{\substack{Z \subseteq \vars(g)\\|Z|=k}}  \Pi_{\vars(g)}(Z)$. (Note that
  $\delta^g_k=0$ when~$k > \vars(g')$.)
\end{definition}
Notice that~$\delta^g_k$ only depends on the “structure” of the circuit, but
not on its semantics.

\begin{lemma}
  \label{lem:duh}
  We can compute in $O(|C'|\times n'^2)$ all quantities~$\delta^g_k$, ignoring the
  cost of arithmetic operations.
\end{lemma}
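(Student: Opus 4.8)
\textbf{Proof plan for Lemma~\ref{lem:duh}.}
The plan is to compute the $\delta^g_k$ by a single bottom-up traversal of the tight \kl{d-D circuit}~$C'$, using exactly the recurrences employed for the $\delta$-values in the first induction block of Algorithm~\ref{alg:dD}. Concretely, I define $\delta^g_k$ for every gate~$g$ and $k\in[n']$ by those recurrences, and prove by structural induction on~$C'$ that $\delta^g_k = \sum_{Z\subseteq\vars(g),\,|Z|=k}\Pi_{\vars(g)}(Z)$. The base cases are immediate: if $g$ is a constant gate then $\vars(g)=\emptyset$ and the only term is $Z=\emptyset$ with $\Pi_\emptyset(\emptyset)=1$, so $\delta^g_0=1$ and $\delta^g_k=0$ for $k\ge 1$; if $g$ is a variable gate labeled~$y$ then $\vars(g)=\{y\}$ and $\delta^g_0=1-p_y$, $\delta^g_1=p_y$, $\delta^g_k=0$ otherwise.

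For the internal gates, a $\lnot$-gate~$g$ with input~$g'$ satisfies $\vars(g)=\vars(g')$, so copying $\delta^g_k=\delta^{g'}_k$ is correct; for an $\lor$-gate~$g$ with inputs $g_1,g_2$, smoothness of the \kl{tight} circuit gives $\vars(g)=\vars(g_1)=\vars(g_2)$, so $\delta^g_k=\delta^{g_1}_k$ again works. The only case carrying content is the $\land$-gate: if $g$ has inputs $g_1,g_2$ then $\vars(g)=\vars(g_1)\cup\vars(g_2)$, and by decomposability this union is disjoint. For $Z\subseteq\vars(g)$, writing $Z_i\defeq Z\cap\vars(g_i)$, the product form of $\Pi$ factors as $\Pi_{\vars(g)}(Z)=\Pi_{\vars(g_1)}(Z_1)\cdot\Pi_{\vars(g_2)}(Z_2)$, and $Z\mapsto(Z_1,Z_2)$ is a bijection onto pairs of subsets of $\vars(g_1)$ and $\vars(g_2)$. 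Grouping by $k_1\defeq|Z_1|$ and applying the induction hypothesis to $g_1$ and $g_2$ yields the convolution $\delta^g_k=\sum_{k_1=0}^{k}\delta^{g_1}_{k_1}\delta^{g_2}_{k-k_1}$.

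For the running time: $C'$ has $O(|C'|)$ gates; a constant, variable, $\lnot$- or $\lor$-gate is handled in $O(n')$ time (copying $n'+1$ numbers), while an $\land$-gate costs $O(n'^2)$, since each of the $n'+1$ values $\delta^g_k$ is a convolution of length at most $k+1\le n'+1$. Summing over all gates gives $O(|C'|\times n'^2)$, ignoring the cost of arithmetic operations, as claimed.

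I do not expect a genuine obstacle here: the argument is a standard bottom-up computation over a \kl{d-D circuit}. The only step that requires care is the $\land$-gate case, where one must combine decomposability (to get disjointness of $\vars(g_1)$ and $\vars(g_2)$) with the multiplicativity of $\Pi_{\vars(g)}$ over disjoint variable sets to recast the subset count as a convolution; likewise one must be careful to invoke smoothness — rather than just determinism — to justify the $\lor$-gate step.
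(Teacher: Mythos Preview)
Your proposal is correct and follows essentially the same approach as the paper: a bottom-up induction over the tight \kl{d-D circuit} with the same base cases, the same use of smoothness for $\lor$-gates and decomposability for $\land$-gates (yielding the convolution), and the same $O(|C'|\times n'^2)$ accounting dominated by the $\land$-gate case.
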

\begin{proof}
  We compute them by bottom-up induction on~$C'$.
  \begin{description}
    \item[Constant gates.] Let~$g$ be a constant gate. Then~$\vars(g) =
      \emptyset$, so $\delta^g_k = 0$ for~$k\geq 1$, and $\delta^g_0=1$
      (indeed~$\Pi_{\emptyset}(\emptyset) = 1$ since this is the neutral
      element of multiplication).
    \item[Input gates.] Let~$g$ be an input gate, with variable~$y$.
      Then~$\vars(g) = \{y\}$, so $\delta^g_k = 0$ for~$k\geq 2$, while
      $\delta^g_0 = \Pi_{\vars(g)}(\emptyset) = 1-p_y$ and $\delta^g_1 =
      \Pi_{\vars(g)}(\{y\}) = p_y$.
      \item[Negation gates.] Let~$g$ be a~$\lnot$-gate with input~$g'$. Notice
        that~$\vars(g) = \vars(g')$. So we have~$\delta^g_k = \delta^{g'}_k$ for all~$k \in [n']$
        and we are done since the values $\delta^{g'}_k$ have already been computed inductively.
      \item[Deterministic smooth~$\lor$-gates.] Let~$g$ be a smooth
    deterministic~$\lor$-gate with inputs $g_1,g_2$. Since~$g$ is smooth we
    have~$\vars(g) = \vars(g_1) = \vars(g_2)$. In particular we have $\delta^g_k = \delta^{g_1}_k$ for all~$k \in [n']$
        and we are done.
      \item[Decomposable~$\land$-gates.] Let~$g$ be a decomposable~$\land$-gate with
    inputs~$g_1,g_2$. Notice that~$\vars(g) = \vars(g_1) \cup \vars(g_2)$ with the union
    being disjoint. We can then decompose~$Z$ into a “left” part~$Z_1\subseteq \vars(g_1)$ of size~$k_1\in \{0,\ldots,k\}$
    and a “right” part~$Z_2\subseteq \vars(g_2)$ of size~$k - k_1$.
    We then have:
      \begin{align*}
        \delta^g_k &= \sum_{\substack{Z \subseteq \vars(g)\\|Z|=k}}  \Pi_{\vars(g)}(Z)
                   = \sum_{k_1 =
                     0}^{k}\sum_{\substack{Z_1
                   \subseteq \vars(g_1)\\|Z_1|=k_1}}\sum_{\substack{Z_2 \subseteq
               \vars(g_2)\\|Z_2|=k-k_1}} \Pi_{\vars(g_1)}(Z_1) \Pi_{\vars(g_2)}(Z_2)\\
                   &= \sum_{k_1 =
                     0}^{k}\sum_{\substack{Z_1
                   \subseteq \vars(g_1)\\|Z_1|=k_1}} \Pi_{\vars(g_1)}(Z_1)\sum_{\substack{Z_2 \subseteq
               \vars(g_2)\\|Z_2|=k-k_1}} \Pi_{\vars(g_2)}(Z_2)
                   = \sum_{k_1 =
                     0}^{k}\sum_{\substack{Z_1
                   \subseteq \vars(g_1)\\|Z_1|=k_1}} \Pi_{\vars(g_1)}(Z_1) \delta^{g_2}_{k-k_1}\\
                   &= \sum_{k_1 =
                     0}^{k} \delta^{g_2}_{k-k_1}\sum_{\substack{Z_1
                   \subseteq \vars(g_1)\\|Z_1|=k_1}} \Pi_{\vars(g_1)}(Z_1)
                   = \sum_{k_1 =
                     0}^{k}\delta^{g_1}_{k_1} \delta^{g_2}_{k-k_1},\\
      \end{align*}
      and we are done.
  \end{description}
  The complexity of every step is $O(n')$ except for $\land$-gates where
  the complexity is $O(n'^2)$; each step needs to be repeated for every
  gate of~$C'$, which gives the stated complexity.
  This concludes the proof of Lemma~\ref{lem:duh}.
\end{proof}

We next define~$\envss$-quantities for all gates of the circuit~$C'$.
\begin{definition}
  For a gate~$g \in C'$ and~$k,\ell \in [n']$, define
  \[\alpha^g_{k,\ell} \defeq \sum_{\substack{Z \subseteq \vars(g)\\|Z|=k}} \sum_{\substack{E\subseteq Z\\|E|=\ell}} \Pi_{\vars(g)}(Z) C'_g(E).\]
\end{definition}
If we can show that we can compute all quantities~$\alpha^g_{k,\ell}$ in
the required complexity then we are
done: indeed, we can then take~$g$ to be the output gate of~$C'$, which gives us
the quantities $\ennv_{k,\ell}(C')$ that we wanted. We show just that in the next lemma.

\begin{lemma}
  \label{lem:alphas}
  We can compute in $O(|C'|\times n'^4)$ all the quantities~$\alpha^g_{k,\ell}$.
\end{lemma}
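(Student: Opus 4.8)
The plan is to compute every $\alpha^g_{k,\ell}$ by a single bottom-up pass over the \kl{tight} \kl{d-D circuit} $C'$, giving for each gate type a recurrence that expresses the family $(\alpha^g_{k,\ell})_{k,\ell\in[n']}$ in terms of the families at the input gates; these are precisely the recurrences driving the $\beta$- and $\gamma$-updates in Algorithm~\ref{alg:dD}. First I would dispatch the base cases. If $g$ is a constant $a$-gate then $\vars(g)=\emptyset$, so the only summand is $Z=E=\emptyset$ with $\Pi_\emptyset(\emptyset)=1$ and $C'_g(\emptyset)=a$, whence $\alpha^g_{0,0}=a$ and all other entries vanish. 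If $g$ is a variable gate for $y$ then $\vars(g)=\{y\}$; since $C'_g(\emptyset)=0$ the only surviving term is $Z=E=\{y\}$, giving $\alpha^g_{1,1}=p_y$ and $0$ elsewhere.

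Then I would handle the three inductive cases, each using exactly one structural property. For a $\lnot$-gate with input $g'$ we have $\vars(g)=\vars(g')$ and $C'_g(E)=1-C'_{g'}(E)$, so splitting the sum yields $\alpha^g_{k,\ell}=\bigl(\sum_{|Z|=k}\Pi_{\vars(g)}(Z)\binom{k}{\ell}\bigr)-\alpha^{g'}_{k,\ell}=\binom{k}{\ell}\delta^g_k-\alpha^{g'}_{k,\ell}$, where the $\delta^g_k$ are precomputed via Lemma~\ref{lem:duh}. For an $\lor$-gate with inputs $g_1,g_2$, smoothness gives $\vars(g)=\vars(g_1)=\vars(g_2)$ and determinism gives $C'_g(E)=C'_{g_1}(E)+C'_{g_2}(E)$ (at most one summand is nonzero), so $\alpha^g_{k,\ell}=\alpha^{g_1}_{k,\ell}+\alpha^{g_2}_{k,\ell}$. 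For an $\land$-gate with inputs $g_1,g_2$, decomposability gives the disjoint partition $\vars(g)=\vars(g_1)\sqcup\vars(g_2)$; I would decompose $Z=Z_1\sqcup Z_2$ and $E=E_1\sqcup E_2$ along this partition, use multiplicativity of $\Pi$ together with $C'_g(E)=C'_{g_1}(E_1)\,C'_{g_2}(E_2)$, and regroup by $k_1=|Z_1|$ and $\ell_1=|E_1|$ to obtain the convolution $\alpha^g_{k,\ell}=\sum_{k_1=0}^k\sum_{\ell_1=0}^{k_1}\alpha^{g_1}_{k_1,\ell_1}\,\alpha^{g_2}_{k-k_1,\ell-\ell_1}$, with out-of-range indices treated as $0$.

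For the running time, I would note that each non-$\land$ gate is processed in $O(n'^2)$ overall (a constant amount of work per pair $(k,\ell)$), whereas an $\land$-gate costs $O(n'^2)$ per pair $(k,\ell)$ because of the inner double sum, i.e., $O(n'^4)$; summing over the $O(|C'|)$ gates gives $O(|C'|\times n'^4)$, and the $O(|C'|\times n'^2)$ precomputation of the $\delta$'s from Lemma~\ref{lem:duh} is absorbed. Taking $g$ to be the output gate of $C'$ then yields the desired values $\ennv_{k,\ell}(C')$, which, as noted just before the lemma, is the point of the whole construction.

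The base cases and the $\lnot$- and $\lor$-recurrences are routine; the step needing genuine care is the $\land$-gate. There the two things to get right are the justification that both $\Pi_{\vars(g)}$ and $C'_g$ factor across the disjoint variable partition — this is exactly where decomposability is indispensable — and the index bookkeeping: one must check that summing $\ell_1$ only up to $k_1$ (instead of over the true feasible range $\max(0,\ell-k+k_1)\le\ell_1\le\min(\ell,k_1)$) changes nothing, since the extra $\alpha$-entries it multiplies are zero by definition. I would also make sure to invoke determinism explicitly in the $\lor$-case, as it is what licenses replacing the disjunction by an integer sum, and smoothness for the equality of variable sets there.
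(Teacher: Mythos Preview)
Your proposal is correct and follows essentially the same approach as the paper: a bottom-up induction on the tight d-D circuit with exactly the same recurrences for constant, variable, $\lnot$-, smooth deterministic $\lor$-, and decomposable $\land$-gates, leading to the same $O(|C'|\times n'^4)$ bound. Your added remarks on why the $\ell_1$-range can be taken as $0,\dots,k_1$ and on the explicit roles of determinism and smoothness are accurate and slightly more explicit than the paper's presentation; the only minor difference is that the paper also counts the $\lnot$-gate as $O(n'^4)$ because it charges $O(k\ell)$ for each binomial coefficient, whereas you implicitly treat these as constant-time lookups --- either accounting gives the stated overall bound.
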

\begin{proof}
This is again done by bottom-up induction on~$C'$.
\begin{description}
  \item[Constant gates.] Let~$g$ be a constant gate. Then~$\vars(g) =
    \emptyset$, so $\alpha^g_{k,\ell}=0$ when~$(k,\ell)\neq (0,0)$, and
    $\alpha^g_{0,0} = 1$ if~$g$ is a constant $1$-gate and $\alpha^g_{0,0} = 0$
    if it is a constant~$0$-gate.
  \item[Input gates.] Let~$g$ be an input gate, with variable~$y$. Then~$\vars(g) =
    \{y\}$, so all values other than $\alpha^g_{0,0}$, $\alpha^g_{1,0}$ and
    $\alpha^g_{1,1}$ are null, and one can easily check that $\alpha^g_{0,0} =
    \alpha^g_{1,0} = 0$ and $\alpha^g_{1,1} = p_y$.
  \item[Negation gates.] Let~$g$ be a~$\lnot$-gate with input~$g'$. Notice
    that~$\vars(g) = \vars(g')$ and that $C'_g(E) = 1- C'_{g'}(E)$ for any~$E\subseteq
    \vars(g)$. We have
    \begin{align*}
      \alpha^g_{k,\ell} &= \sum_{\substack{Z \subseteq \vars(g)\\|Z|=k}} \sum_{\substack{E\subseteq Z\\|E|=\ell}} \Pi_{\vars(g)}(Z) (1- C'_{g'}(E))
                     = \bigg[\binom{k}{\ell} \sum_{\substack{Z \subseteq \vars(g)\\|Z|=k}}  \Pi_{\vars(g)}(Z)\bigg] - \alpha^{g'}_{k,\ell}\\
                     &= \binom{k}{\ell} \delta^{g}_k - \alpha^{g'}_{k,\ell},\\
    \end{align*}
    and we are done thanks to Lemma~\ref{lem:duh}.
  \item[Deterministic smooth~$\lor$-gates.] Let~$g$ be a smooth
    deterministic~$\lor$-gate with inputs $g_1,g_2$. Since~$g$ is smooth we
    have~$\vars(g) = \vars(g_1) = \vars(g_2)$, and since it is deterministic we
    have~$C'_g(E) = C'_{g_1}(E)+C'_{g_1}(E)$ for any~$E\subseteq \vars(g)$. Therefore
    we obtain $\alpha^g_{k,\ell} = \alpha^{g_1}_{k,\ell} + \alpha^{g_2}_{k,\ell}$ and we
    are done.
  \item[Decomposable~$\land$-gates.] Let~$g$ be a decomposable~$\land$-gate
    with inputs~$g_1,g_2$. Notice that~$\vars(g) = \vars(g_1) \cup \vars(g_2)$ with the
    union being disjoint, and that $C'_g(E) = C'_{g_1}(E\cap \vars(g_1)) \times
    C'_{g_2}(E\cap \vars(g_2))$ and~$\Pi_{\vars(g)}(Z) = \Pi_{\vars(g_1)}(Z\cap \vars(g_1))
    \times \Pi_{\vars(g_2)}(Z\cap \vars(g_2))$ for any~$Z,E\subseteq \vars(g)$. We
    decompose the summations over~$Z$ and~$E$ as we did in the proof of
    Lemma~\ref{lem:duh} for~$\land$-gates. For readability we use colors to
    point out which parts of the expressions are modified or moved around.
    \begin{align*}
      \alpha^g_{k,\ell} &= \textcolor{teal}{\sum_{\substack{Z \subseteq
      \vars(g)\\|Z|=k}}} \sum_{\substack{E\subseteq Z\\|E|=\ell}} \Pi_{\vars(g)}(Z)
      C'_g(E)\\
                     &=  \sum_{k_1 =
                     0}^{k}\sum_{\substack{Z_1
                   \subseteq \vars(g_1)\\|Z_1|=k_1}}\sum_{\substack{Z_2 \subseteq
             \vars(g_2)\\|Z_2|=k-k_1}} \textcolor{teal}{\sum_{\substack{E\subseteq Z_1\cup
       Z_2\\|E|=\ell}}} \Pi_{\vars(g_1)}(Z_1)\Pi_{\vars(g_2)}(Z_2) C'_g(E)\\
                     &=  \sum_{k_1 =
                     0}^{k}\sum_{\substack{Z_1
                   \subseteq \vars(g_1)\\|Z_1|=k_1}}\sum_{\substack{Z_2 \subseteq
               \vars(g_2)\\|Z_2|=k-k_1}} \textcolor{teal}{\sum_{\ell_1 =
           0}^{k_1}} \sum_{\substack{E_1\subseteq
             Z_1\\|E_1|=\ell_1}} \sum_{\substack{E_2\subseteq Z_2\\|E_2|=\ell -
           \ell_1}} \Pi_{\vars(g_1)}(Z_1)\Pi_{\vars(g_2)}(Z_2) C'_{g_1}(E_1)
             C'_{g_2}(E_2)\\
                     &=  \sum_{k_1 = 0}^{k}
                     \sum_{\ell_1 = 0}^{k_1}
                     \sum_{\substack{Z_1 \subseteq
                     \vars(g_1)\\|Z_1|=k_1}}\sum_{\substack{Z_2 \subseteq
                 \vars(g_2)\\|Z_2|=k-k_1}}  \textcolor{teal}{\sum_{\substack{E_1\subseteq
             Z_1\\|E_1|=\ell_1}}} \sum_{\substack{E_2\subseteq Z_2\\|E_2|=\ell -
             \ell_1}} \Pi_{\vars(g_1)}(Z_1)\Pi_{\vars(g_2)}(Z_2) C'_{g_1}(E_1)
             C'_{g_2}(E_2)\\
                     &=  \sum_{k_1 = 0}^{k}
                     \sum_{\ell_1 = 0}^{k_1}
                     \sum_{\substack{Z_1 \subseteq
                     \vars(g_1)\\|Z_1|=k_1}}\sum_{\substack{E_1\subseteq
                 Z_1\\|E_1|=\ell_1}} \sum_{\substack{Z_2 \subseteq
                 \vars(g_2)\\|Z_2|=k-k_1}}   \sum_{\substack{E_2\subseteq
             Z_2\\|E_2|=\ell - \ell_1}} \textcolor{teal}{\Pi_{\vars(g_1)}(Z_1)}\Pi_{\vars(g_2)}(Z_2)
             \textcolor{teal}{C'_{g_1}(E_1)} C'_{g_2}(E_2)\\
                     &=  \sum_{k_1 = 0}^{k}
                     \sum_{\ell_1 = 0}^{k_1}
                     \sum_{\substack{Z_1 \subseteq
                     \vars(g_1)\\|Z_1|=k_1}}\sum_{\substack{E_1\subseteq
                 Z_1\\|E_1|=\ell_1}}\Pi_{\vars(g_1)}(Z_1)C'_{g_1}(E_1) \textcolor{teal}{\sum_{\substack{Z_2 \subseteq \vars(g_2)\\|Z_2|=k-k_1}}
                 \sum_{\substack{E_2\subseteq Z_2\\|E_2|=\ell - \ell_1}}\Pi_{\vars(g_2)}(Z_2) C'_{g_2}(E_2)}\\
                     &=  \sum_{k_1 = 0}^{k}
                     \sum_{\ell_1 = 0}^{k_1}
                     \sum_{\substack{Z_1 \subseteq
                     \vars(g_1)\\|Z_1|=k_1}}\sum_{\substack{E_1\subseteq
                 Z_1\\|E_1|=\ell_1}}\Pi_{\vars(g_1)}(Z_1)C'_{g_1}(E_1)
                 \textcolor{teal}{\alpha^{g_2}_{k-k_1,\ell-\ell_1}}\\
                     &=  \sum_{k_1 = 0}^{k}
                     \sum_{\ell_1 =
                     0}^{k_1}\alpha^{g_2}_{k-k_1,\ell-\ell_1}
                     \textcolor{teal}{\sum_{\substack{Z_1 \subseteq
                     \vars(g_1)\\|Z_1|=k_1}}\sum_{\substack{E_1\subseteq
               Z_1\\|E_1|=\ell_1}}\Pi_{\vars(g_1)}(Z_1)C'_{g_1}(E_1)} \\
                     &=  \sum_{k_1 = 0}^{k}
                     \sum_{\ell_1 =
                     0}^{k_1}\alpha^{g_1}_{k_1,\ell_1} \times \alpha^{g_2}_{k-k_1,\ell-\ell_1}
                     .\\
    \end{align*}
    and we are done.
\end{description}
The complexity is given by that of the step for $\lnot$- and
$\land$-gates, which are
the most costly at $O(n'^4)$ (recall that computing $\binom{k}{\ell}$ is in
$O(k\times\ell)$), which we need to multiply by the size of
the circuit.
This concludes the proof of Lemma~\ref{lem:alphas}.
\end{proof}

In Algorithm~\ref{alg:dD}, the $\beta$ values are the $\alpha$ values for $C_1$
and the $\gamma$ values are the $\alpha$ values for~$C_0$, and they are
computed in a single pass over the circuit~$C$ instead of first computing~$C_1$
and $C_0$ and making passes over these two circuits. Therefore,
Algorithm~\ref{alg:dD} is correct. To obtain the final complexity, we
need to add the cost of line~\ref{line:return1}, which is in
$O(n'^2\times\mathrm{T}(n'))$ ignoring the cost of arithmetic operations,
and remember that $|C_1|$ and $|C_0|$ are in $O(|C|\times|V|)$ by
Lemma~\ref{lem:tight}.

What remains to argue is that the number of bits (numerator and denominator) of
all the $\alpha$ and $\delta$ values stays polynomial. But
for~$\alpha^g_{k,\ell}$ for instance we have
\begin{align*}
  \alpha^g_{k,\ell} &\defeq \sum_{\substack{Z \subseteq \vars(g)\\|Z|=k}} \sum_{\substack{E\subseteq Z\\|E|=\ell}} \Pi_{\vars(g)}(Z) C'_g(E)
                    \leq 2^{2|V|} \max_{Z \subseteq \vars(g)} \Pi_{\vars(g)}(Z).
\end{align*}
If the number of bits of all numerators and denominators of all~$p_x$ is
bounded by $b$, then the numerator of $\alpha^g_{k,\ell}$ is bounded by
$2^{2|V|} 2^{b|V|} = 2^{(b+2)|V|}$, so indeed have a polynomial number of bits
for the numerators, and similar reasoning works for denominators and for the
$\delta$ values.
\end{toappendix}

In the case where all probabilities are identical, we can obtain a lower
complexity by reusing techniques from~\cite{deutch2022computing}:

\begin{toappendix}
  \subsection{Proof of Proposition~\ref{prp:equal_prob}}
\end{toappendix}

\begin{propositionrep}
 \label{prp:equal_prob}
 Let $c$ be a \kl{tractable} \kl{coefficient function}. Given a
 \kl{d-D} $C$ on variables~$V$, a unique probability value $p=p_y$
for all~$y\in V$, and~$x\in V$, $\escore_c(C,x)$ can be computed in time
$O\left(|V|^2\times(|C||V|+|V|^2 + \mathrm{T}_c(|V|))\right)$ assuming unit-cost arithmetic.
\end{propositionrep}

\begin{proof}
  To prove Proposition~\ref{prp:equal_prob} we consider a \kl{d-D}
  circuit $C$ over variables~$V$ with $n=|V|$. For a variable~$x\in V$,
\begin{align*}
  \escore_c(C,x) &= \sum_{\substack{Z\subseteq V\\x\in Z}}\Pi_{V}(Z)\times \score_c(C,Z,x) \\
                 &= \sum_{\substack{Z\subseteq V\\x\in
                 Z}}\Pi_{V}(Z)\sum_{E\subseteq Z\setminus\{x\}}
                 c(|Z|,|E|)\times \left[C(E\cup\{x\})-C(E)\right]\\
  &= \sum_{E\subseteq V \setminus \{x\}}\left[\sum_{E\subseteq Z\subseteq
  V\setminus\{x\}}c(|Z|+1,|E|)\times p_x\times\Pi_{V}(Z)\right][C(E\cup\{x\}) - C(E)]\\
  &= \sum_{\ell=0}^{|V|-1}
  \sum_{\substack{E\subseteq V \setminus \{x\}\\|E|=\ell}}
  \sum_{k=\ell}^{|V|-1}
\left[\sum_{\substack{E\subseteq Z\subseteq
V\setminus\{x\}\\|Z|=k}}c(k+1,\ell)\times p^{k+1}(1-p)^{n-k-1}\right][C(E\cup\{x\}) - C(E)]\\
  &= \sum_{\ell=0}^{|V|-1}
  \sum_{k=\ell}^{|V|-1}
\left[\binom{n-1-\ell}{k-\ell}\times c(k+1,\ell)\times p^{k+1}(1-p)^{n-k-1}\right]
  \sum_{\substack{E\subseteq V \setminus \{x\}\\|E|=\ell}}
[C(E\cup\{x\}) - C(E)]\\
  &= \sum_{\ell=0}^{|V|-1}
  \left[\mathsf{\#SAT}_\ell(C_1)-\mathsf{\#SAT}_\ell(C_0)\right]
  \sum_{k=\ell}^{|V|-1}
\left[\binom{n-1-\ell}{k-\ell}\times c(k+1,\ell)\times p^{k+1}(1-p)^{n-k-1}\right]
\end{align*}
where we set $C_1$ and $C_0$ as usual and $\#\mathsf{SAT}_\ell(C')$ is the
number of satisfying valuations of size~$\ell$ of the circuit~$C'$.
Using the techniques of~\cite{deutch2022computing} (in particular,
Lemma~4.5 of this paper), we can show that all the
$\mathsf{\#SAT}\ell(C')$ values for a tight circuit~$C'$ over variables $|V'|$
can be computed in
$O(|C'|\times |V'|^2)$. So, to compute all $\#SAT_\ell(C_0)$, we first need to make it
tight (in $O(|C|\times |V|)$) and then we have a cost of $O(|C|\times
|V|^3)$.

Now, to compute the rest of the sum, we need to compute for every $\ell$
and $k$ a binomial coefficient in $O(n^2)$, a value of the
coefficient function in $O(\mathrm{T}_c(n))$ and perform the other
multiplications in $O(1)$ assuming unit cost arithmetic. We obtain thus
an algorithm in $O\left(|C|\times |V|^3+|V|^2\times (|V|^2+
\mathrm{T}_c(|V|))\right)$.
\end{proof}

\paragraph*{Quadratic-time algorithm for expected Banzhaf score.} For the
expected Shapley value, instantiating Algorithm~\ref{alg:dD} with $c=\cshapley$
seems to be the best that we can do. For the expected Banzhaf value however, we
can design a more efficient algorithm. We start from
Equation~\eqref{eq:banzhaf-easy}, restated here in terms of circuits:
\begin{equation}
  \label{eq:banzhaf-easy-circuits}
  \ebanzhaf(C,x) = p_x [\env(C_1) - \env(C_0)].
\end{equation}
We can show that $\env$ can be computed in
linear time over \kl{tight} \kl{d-D circuits}, thus obtaining a
$O(|C|\times|V|)$ complexity
for $\escore_{\cbanzhaf}$ over arbitrary \kl{d-D circuits} by Lemma~\ref{lem:tight}:

\begin{toappendix}
  \subsection{Proof of Theorem~\ref{thm:banzhaf-dD}}
\end{toappendix}

\begin{theoremrep}
  \label{thm:banzhaf-dD}
Given a \kl{d-D} $C$ on variables~$V$, probability values~$p_y$ for~$y\in V$,
and~$x\in V$, we can compute in time $O(|C|\times|V|)$ (ignoring the cost of
arithmetic operations) the quantity $\escore_{\cbanzhaf}(C,x)$.
\end{theoremrep}

\begin{proof}
    As argued in Section~\ref{sec:dD}, we need only to prove, thanks to
    Equation~\eqref{eq:banzhaf-easy-circuits}, that
    $\env$ can be computed in linear time for
\kl{tight} \kl{d-D circuits}.
Let~$C'$ be a \kl{tight} \kl{d-D} over variables $V'$ and $p_x$
probability values for all~$x\in V'$. We want to compute $\env(\phi) \defeq
\sum_{Z\subseteq V'} \Pi_{V'}(Z)\sum_{E\subseteq Z} C'(E)$.
(Recall that this will be instantiated with $C'=C_1$ and $C'=C_0$) from
Equation~\eqref{eq:banzhaf-easy-circuits}.)
We do this again by
bottom-up induction on the circuit, computing the corresponding quantities for
very gate. Formally, for a gate $g$ of $C'$, define:
\(\alpha^g \defeq \sum_{Z\subseteq \vars(g)} \Pi_{\vars(g)}(Z)\sum_{E\subseteq Z} C'_g(E).\)
Notice that we want $\alpha^g$ for $g$ the output gate of $C'$.
We show next how this can be done.

\begin{description}
 \item[Constant gates.] Let~$g$ be a constant gate. Then~$\vars(g) =
      \emptyset$, so $\alpha^g$ equals $1$ if $g$ is a constant $1$-gate and
      $0$ if it is a constant $0$-gate.
\item[Input gates.] Let~$g$ be an input gate, with variable~$y$.
      Then~$\vars(g) = \{y\}$, so $\alpha^g = p_y$.
    \item[Negation gates.] Let~$g$ be a~$\lnot$-gate with input~$g'$.\\
      Then $C'_g(E) = 1 - C'_{g_1}(E)$,
      therefore $\alpha^g = \big[\sum_{Z\subseteq \vars(g)} \Pi_{\vars(g)}(Z)\sum_{E\subseteq Z} 1\big] - \alpha^{g'}$.
      We have already observed in the proof of Lemma~\ref{lem:ebanzhaf-to-esv} that the first term is equal to
      $\prod_{y\in \vars(g)}(1+p_y)$, therefore we obtain $\alpha^g = \big[\prod_{y\in \vars(g)}(1+p_y)\big] - \alpha^{g'}$.
\item[Deterministic smooth~$\lor$-gates.] Let~$g$ be a smooth
    deterministic~$\lor$-gate with inputs $g_1,g_2$. Since~$g$ is smooth we
    have~$\vars(g) = \vars(g_1) = \vars(g_2)$, and since it is deterministic we have $C'_g(E) = C'_{g_1}(E) + C'_{g_2}(E)$.
    Therefore $\alpha^g = \alpha^{g_1} + \alpha^{g_2}$.
\item[Decomposable~$\land$-gates.] Let~$g$ be a decomposable~$\land$-gate with
  inputs~$g_1,g_2$. We decompose the sum similarly to what we did in the proof of Theorem~\ref{thm:dD}:

  \begin{align*}
  \alpha^g &= \sum_{Z\subseteq \vars(g)} \Pi_{\vars(g)}(Z)\sum_{E\subseteq Z} C'_g(E)\\
   &= \sum_{Z_1\subseteq \vars(g_1)} \sum_{Z_2\subseteq \vars(g_2)} \Pi_{\vars(g)}(Z_1 \cup Z_2)\sum_{E_1\subseteq Z_1}\sum_{E_2\subseteq Z_2} C'_g(E_2 \cup E_2)\\
   &= \sum_{Z_1\subseteq \vars(g_1)} \sum_{Z_2\subseteq \vars(g_2)} \Pi_{\vars(g_1)}(Z_1)\Pi_{\vars(g_2)}(Z_2)\sum_{E_1\subseteq Z_1}\sum_{E_2\subseteq Z_2} C'_{g_1}(E_1)C'_{g_2}(E_2)\\
   &= \sum_{Z_1\subseteq \vars(g_1)} \Pi_{\vars(g_1)}(Z_1) \sum_{E_1\subseteq Z_1} C'_{g_1}(E_1)\sum_{Z_2\subseteq \vars(g_2)} \Pi_{\vars(g_2)}(Z_2)\sum_{E_2\subseteq Z_2} C'_{g_2}(E_2)\\
   &= \sum_{Z_1\subseteq \vars(g_1)} \Pi_{\vars(g_1)}(Z_1) \sum_{E_1\subseteq Z_1} C'_{g_1}(E_1) \alpha^{g_2}\\
   &=  \alpha^{g_2} \sum_{Z_1\subseteq \vars(g_1)} \Pi_{\vars(g_1)}(Z_1) \sum_{E_1\subseteq Z_1} C'_{g_1}(E_1)\\
   &=  \alpha^{g_2} \alpha^{g_1}.
  \end{align*}
\end{description}
This concludes the proof, as all of this can be done in
$O(|C'|)$,
ignoring the cost of arithmetic operations (and, in any case, the number of bits
stays polynomial).
For it to be true for negation gates, we need to compute
$\prod_{y\in\vars(g)}(1+p_y)$ for every gate, which can also be done during
the bottom-up processing of the circuit.
\end{proof}

\paragraph*{Comparing to and recovering the algorithms of \cite{deutch2022computing} and \cite{abramovich2023banzhaf}.}
We end this section by discussing how this relates to the algorithms
proposed in~\cite{deutch2022computing} and~\cite{abramovich2023banzhaf},
respectively for Shapley and Banzhaf value computation in a deterministic
(non-probabilistic) setting.

First, we note that we can specialize Algorithm~\ref{alg:dD} to the
computation of Shapley values by setting all $p_y$ to~$1$, which means that,
when computing $\env_{k,\ell}$-values, we
only need to consider the case where $k=n$ as all $Z\subseteq V$ with $|Z|<n$
have $\Pi_V(Z)=0$. This leads to the following simplifications:
$\delta_k^g$ values need not be computed as they are all
  $0$'s except for $\delta_{|\vars(g)|}^g=1$;
similarly, $\beta^g_{k,\ell}$ and $\gamma^g_{k,\ell}$ values need only be computed
  when $k=|\vars(g)|$, all other being set to~$0$.
This simplifies the computation to remove a factor of $|V|^2$, and we
essentially recover the algorithm described in~\cite{deutch2022computing}.
Note that the final complexity obtained is $O(|C|\times
|V|^3)$, which is better than the complexity from
Proposition~\ref{prp:equal_prob}.

\begin{toappendix}
  \subsection{Complexity in the Case where All Probabilities are~1}

  As discussed at the end of Section~\ref{sec:dD}, when all probabilities
  are set to~$1$, we recover the algorithm of~\cite{deutch2022computing}
  for non-probabilistic Shapley value computation. We briefly discuss its
  precise complexity.

  In that setting, as discussed, we do not need to compute $\delta^g_k$
  values (line~\ref{line:begin1}--\ref{line:end1}) so we only need to discuss the cost of computing
  $\beta^g_{k,\ell}$ and~$\gamma^g_{k,\ell}$
  (lines~\ref{line:begin2}--\ref{line:end2}) on the one hand, and of
  line~\ref{line:return1} on the other hand. Further, recall that only the setting
  where $k=|\vars(g)|$ is relevant. This means that the main loop to
  compute $\beta$ and $\gamma$ values is run $|\vars(g)|$ times instead
  of $|\vars(g)|^2$ times, and furthermore, that in the case where $g$ is
  an $\land$-gate, its computation involves a single sum, as we can set
  $k_1$ to be $|\vars(g_1)|$ (and thus $k_2$ to be $|\vars(g_2)|$) on
  lines~$\ref{line:and1}$ and~$\ref{line:and2}$.
  Finally, on line~\ref{line:return1}, similarly, we only have one sum
  operator as $\beta^{g_\mathrm{out}}_{k,\ell}$
  and~$\gamma^{g_\mathrm{out}}_{k,\ell}$ are zero when
  $k\neq|\vars(g_{\mathrm{out}})|$.

  Remember that since the circuit needs to be made tight, its size is
  $O(|C|\times|V|)$. We therefore have for complexity:
  \(O\left(|C|\times|V|\times |V|^2+|V|\times
  \mathrm{T}_{\cshapley}(|V|)\right)=O\left(|C|\times |V|^3+|V|\times
|V|^2\right)=O\left(|C|\times|V|^3\right).\)
\end{toappendix}

Second, we observe that the algorithm underlying
Theorem~\ref{thm:banzhaf-dD} has the same complexity as the exact
algorithm in~\cite{abramovich2023banzhaf} for computing (non-expected) Banzhaf values. We
note that \cite{abramovich2023banzhaf} considers \emph{decomposition trees}
instead of d-D circuits, but any decomposition tree is in fact a d-D circuit in
disguise, since a decomposable OR of the form $A \lor B$ can be rewritten as
$\lnot(\lnot A \land \lnot B)$, with the AND being decomposable.
Their algorithm works in linear time on decomposition trees that are tight (see
their Section 3.1), hence we obtain the same complexity while solving a
seemingly more general problem: we study the \emph{expected} Banzhaf values (which degenerates to the non-expected setting when all probabilities are $1$),
and d-D circuits are more general than decomposition trees as they allow
sharing of subexpressions (i.e., the circuit is a DAG instead of a
tree).

\section{Probabilistic Databases}
\label{sec:pdbs}
\paragraph*{(Probabilistic) databases and queries.}
Let $\Sigma = \{R_1,\ldots,R_n\}$ be a \emph{signature}, consisting of
\emph{relation names} each with their associated \emph{arity}~$\ar(R_i) \in
\mathbb{N}$, and~$\const$ be a set of \emph{constants}.  A \emph{fact} over
$(\Sigma,\const)$ is a term of the form~$R(a_1,\ldots,a_{\ar(R)})$, for~$R \in
\Sigma$ and~$a_i \in \const$. A~\emph{($\Sigma,\const)$-database}~$D$ (or
simply a \emph{database}~$D$) is a finite set of facts over $(\Sigma,\const)$.
We assume familiarity with the most common classes of query languages and refer
the reader to~\cite{abiteboul1995foundations,ABLMP21} for the basic
definitions.  A \emph{Boolean query} is a query~$q$ that takes as input a
database~$D$ and outputs~$q(D) \in \{0,1\}$.  If~$q(\bar{x})$ is a query with
free variables~$\bar{x}$ and~$\bar{t}$ is a tuple of constants of appropriate
length, we denote by~$q[\bar{x}/\bar{t}]$ the Boolean query defined by
$q[\bar{x}/\bar{t}](D) = 1$ if and only if~$\bar{t}$ is in the output
of~$q(\bar{x})$ on~$D$. A \emph{tuple-independent probabilistic database}, or
\emph{TID} for short, consists of a database~$D$ together with probability
values $p_f$ for every fact $f\in D$. For a Boolean query $q$ and TID $\bD =
(D,(p_f)_{f\in D})$, the \emph{probability that $\bD$ satisfies $q$}, written
$\Pr(\bD \models q)$, is defined as $\Pr(\bD \models q) \defeq \sum_{D'
\subseteq D\text{~s.t.~}q(D')=1} \Pr(D')$, where $\Pr(D')$ is $\prod_{f \in D'}
p_f \times \prod_{f \in D \backslash D'} (1 - p_f)$. For a fixed Boolean query
$q$, we denote by $\pqe(q)$ the computational problem that takes as input a TID
$\bD$ and outputs $\Pr(\bD \models q)$.

\begin{table}
  \centering
  \small
  \caption{\rev{A TID with two relations \textmd{\textsf{Students}} and \textmd{\textsf{Grades}}}}
  \makebox[0pt]{\rev{\begin{tabular}{clccc}
        \multicolumn{5}{c}{\textsf{Students}}\\
    \toprule
    \textbf{ID} & \textbf{Name} & \textbf{Age} & \textbf{Prob.} &
  \textbf{Prov.} \\
    \midrule
    01 & Alice & 20 & 0.4 & $A$ \\
    02 & Bob & 21 & 0.3 & $B$ \\
    03 & Charlie & 22 & 0.6 & $C$ \\
    04 & Danny & 25 & 0.8 & $D$ \\
    \bottomrule
\end{tabular}}\qquad
  \rev{\begin{tabular}{cccc}
        \multicolumn{4}{c}{\textsf{Grades}}\\
    \toprule
    \textbf{ID} & \textbf{Grade} & \textbf{Prob.} & \textbf{Prov.} \\
    \midrule
    01 & 86 & 0.5 & $a$ \\
    02 & 80 & 0.2 & $b$ \\
    03 & 92 & 0.8 & $c$ \\
    04 & 99 & 0.9 & $d$ \\
    \bottomrule
\end{tabular}}}
  \label{tab:students_grades}
\end{table}

\paragraph*{(Expected) Shapley-like scores.}
Let~$c:\NN \times \NN \to \QQ$ be a \kl{coefficient function}, $q$ a Boolean query,
$D$ a database and $f\in D$ a fact. Following the literature
\cite{DBLP:conf/icdt/LivshitsBKS20,livshits2021shapley,deutch2022computing}, we
define the \emph{Shapley-like score with \kl{coefficients}~$c$ of $f$ in $D$ with
  respect to~$q$}, or simply \emph{score} when clear,
  by\footnote{We point out that the facts of $D$ are traditionally
    partitioned between \emph{endogenous} and \emph{exogenous}
    facts, but we do not make this distinction in our work. This is to
  simplify the presentation, as usual definitions would extend
in a straightforward manner.}
\(\score_c(q,D,f) \defeq \sum_{E \subseteq D\setminus \{f\}}
    c(|D|,|E|)\times \big[q(E\cup \{f\}) - q(E)\big].
  \)
We denote by $\score_c(q)$ the
corresponding computational problem. Let now $\bD = (D, (p_f)_{f\in D})$ be a
TID and $f\in D$, and define the
\emph{expected score of~$f$ for~$q$ in $\bD$} as:
  \[\escore_c(q,\bD,f) \defeq \sum_{\substack{Z\subseteq D, f\in Z}}
    \left(\Pr(Z)\times \score_c(q,Z,f)\right),
  \]
where in $\score_c(q,Z,f)$ we see~$q$ as a function from~$2^Z$ to~$\{0,1\}$.
Define the problem $\escore_c(q)$ as expected. Note that all of this matches
our definitions of expected values, Shapley-like and expected Shapley-like
scores for Boolean functions.

\begin{example}
  \rev{Consider the TID shown
  in Table~\ref{tab:students_grades}. The probability of each tuple is
shown in the next-to-last column of each table, and a tuple identifier
used as a \emph{provenance token} in the last one. Consider the
Boolean query~$q_\mathrm{ex}$ over these tables expressed in SQL as:
``\texttt{SELECT DISTINCT 1 FROM \textup{\textsf{Students}} s JOIN
  \textup{\textsf{Grades}} g ON s.ID = g.ID
WHERE Age < 23 AND Grade >= 85}''. Its \emph{Boolean
provenance}~\cite{senellart2017provenance} can be computed to be exactly
the DNF formula $\phiex=(A\land a)\lor(C\land c)$ from the running example. Its
probability is thus 0.584, as computed in Example~\ref{exa:ev}. The
expected Shapley value of every tuple contributing to making
$q_\mathrm{ex}$ true is given in Table~\ref{tab:shap-eshap}.
}
\end{example}

As usual, if the query $q(\bar{x})$ has free variables, for a tuple $\bar{t}$
of appropriate length we can define similarly, for $f\in D$, the expected score
of $f$ by using the Boolean query $q[\bar{x}/\bar{t}]$ in the above definition.
This score then represents the contribution of $f$ to $\bar{t}$
potentially being in the
query result (one might in particular be interested in explaining why a tuple is
\emph{not} in the query result).

Then Theorem~\ref{thm:ev-to-escore} directly translates into this setting of
Shapley-like scores of facts over probabilistic databases:
\begin{theorem}
  \label{thm:pqe-to-escore}
We have~$\escore_c(q) \tr \pqe(q)$
for any \kl{tractable} \kl{coefficient function}~$c$ and any Boolean query $q$.
\end{theorem}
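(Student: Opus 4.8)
The plan is to derive Theorem~\ref{thm:pqe-to-escore} as an essentially immediate consequence of Theorem~\ref{thm:ev-to-escore}, by instantiating the latter with the right class of Boolean functions. Fix the Boolean query $q$ and let $\calF_q$ be the class of Boolean functions whose representations are databases $D$ over the (fixed) signature, the function represented by $D$ being $\phi_{q,D}\colon 2^D\to\{0,1\}$ defined by $\phi_{q,D}(E)\defeq q(E)$. This is a sensible encoding in the sense of Section~\ref{sec:prelims}: a database is a finite set of facts given explicitly, so $|D|$ is available in unary, and the facts play the role of the variable set $V$. As the paper already notes, the definitions of $\score_c$, $\escore_c$ and of the probability $\Pr(\bD\models q)$ for databases were set up precisely to match the Boolean-function definitions under this identification.

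The next step is to record the two correspondences that make the reduction go through. First, $\ev(\calF_q)$ is literally the problem $\pqe(q)$: on input a database $D$ and probability values $(p_f)_{f\in D}$, the quantity $\ev(\phi_{q,D})=\sum_{E\subseteq D}\Pi_D(E)\,q(E)$ is by definition $\Pr(\bD\models q)$ for the TID $\bD=(D,(p_f)_{f\in D})$. Second, $\escore_c(\calF_q)$ is literally $\escore_c(q)$: on input $D$, probabilities $(p_f)_{f\in D}$ and a fact $f\in D$, we have $\escore_c(\phi_{q,D},f)=\escore_c(q,\bD,f)$. Applying Theorem~\ref{thm:ev-to-escore} with $\calF=\calF_q$ (which is legitimate since $c$ is tractable) then yields $\escore_c(q)=\escore_c(\calF_q)\tr\ev(\calF_q)=\pqe(q)$, as claimed.

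The one point requiring a moment of care — and the only real obstacle — is that the oracle in Theorem~\ref{thm:ev-to-escore} is an oracle for $\ev(\calF_q)$, i.e.\ for $\pqe(q)$, so one must check that the reduction never needs to query it on a Boolean function outside $\calF_q$, that is, one that is not of the form $\phi_{q,D'}$ for an actual database $D'$. Inspecting the chain proving Theorem~\ref{thm:ev-to-escore} — Lemma~\ref{lem:envss-to-escore} ($\escore_c\tr\envss$), Lemma~\ref{lem:evs-to-envss} ($\envss\tr\evs$) and Lemma~\ref{lem:ev-to-evs} ($\evs\tr\ev$) — one verifies that every oracle call is made on the very same input function $\phi_{q,D}$, only with the probability values modified: the trick in Lemma~\ref{lem:envss-to-escore} for simulating conditioning on $x$ works by varying $p_x\in\{0,1\}$ rather than by building $\phi_{+x}$ or $\phi_{-x}$, and Lemmas~\ref{lem:evs-to-envss} and~\ref{lem:ev-to-evs} are pure polynomial interpolation with rescaled probabilities; moreover all probability values so produced are valid rationals in $[0,1]$, as already observed in those proofs. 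Hence the reduction translates verbatim into one that calls $\pqe(q)$ only on the same database $D$ with different (valid) probability assignments, with the coefficient-function evaluations staying in $\mathsf{P}$ because $c$ is tractable. This gives the reduction $\escore_c(q)\tr\pqe(q)$ and completes the proof.
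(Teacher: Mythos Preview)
Your proof is correct and follows exactly the same approach as the paper's: instantiate Theorem~\ref{thm:ev-to-escore} with the class $\calF_q$ of Boolean provenance functions $\phi_{q,D}$, represented by the database~$D$ itself. Your additional paragraph verifying that every oracle call in the chain of Lemmas~\ref{lem:envss-to-escore}--\ref{lem:ev-to-evs} is made on the same function $\phi$ with modified probabilities (rather than on $\phi_{+x}$ or $\phi_{-x}$) is a useful sanity check that the paper leaves implicit, but it is indeed already ensured by the design of those lemmas (this is precisely the point made inside the proof of Lemma~\ref{lem:envss-to-escore}).
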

\begin{proof}
  It suffices to instantiate Theorem~\ref{thm:ev-to-escore} with the set
  of Boolean functions $\calF_q \defeq \{\phi_{q,D} \mid D \text{ is a database}\}$, where
  $\phi_{q,D}: 2^D \to \{0,1\}$, the \emph{Boolean
  provenance}~\cite{senellart2017provenance} of query~$q$ over $D$, is defined by $\phi_{q,D}(D') \defeq q(D')$ for $D'
  \subseteq D$. Here, it is implicit that the Boolean function $\phi_{q,D}$ is represented by $D$ itself.
\end{proof}
In the case of the Shapley value, we can even get a full equivalence from
Corollary~\ref{cor:equiv}:
\begin{corollary}
  \label{cor:pqe-escore}
  We have~$\escore_{\cshapley}(q) \equivt \pqe(q)$
  for any Boolean query $q$.
\end{corollary}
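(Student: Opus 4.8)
The plan is to derive Corollary~\ref{cor:pqe-escore} by instantiating the general machinery of Section~\ref{sec:equiv} with the class of Boolean functions coming from provenance, exactly as was done for Theorem~\ref{thm:pqe-to-escore}. Concretely, fix a Boolean query $q$ and let $\calF_q \defeq \{\phi_{q,D} \mid D \text{ a database}\}$ where $\phi_{q,D}\colon 2^D \to \{0,1\}$ is the Boolean provenance of $q$ over $D$, represented by $D$ itself together with (implicitly) the query $q$. Under this representation, the problem $\ev(\calF_q)$ is literally the problem $\pqe(q)$ (the expected value of $\phi_{q,D}$ under independent probabilities $p_f$ is $\Pr(\bD \models q)$), and the problem $\escore_{\cshapley}(\calF_q)$ is literally the problem $\escore_{\cshapley}(q)$, since all the definitions for facts over probabilistic databases were set up in this section to match verbatim the definitions for Boolean functions. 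So the statement $\escore_{\cshapley}(q) \equivt \pqe(q)$ is just $\escore_{\cshapley}(\calF_q) \equivt \ev(\calF_q)$.

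Having made this identification, I would invoke Corollary~\ref{cor:equiv}, which gives $\eshapley(\calF) \equivt \ev(\calF)$ for every \kl{reasonable} class $\calF$. It therefore remains only to check that $\calF_q$ is \kl{reasonable}, i.e., that given a representation of $\phi_{q,D} \in \calF_q$ — which is just the database $D$ — one can compute $\phi_{q,D}(\emptyset) = q(\emptyset)$ in polynomial time. But $q(\emptyset)$ is the evaluation of the fixed Boolean query $q$ on the empty database, which is a constant (independent of the input $D$): it can be precomputed, and in any case evaluating a fixed query on a fixed finite database is trivially in $\mathsf{P}$. Hence $\calF_q$ is \kl{reasonable}, and Corollary~\ref{cor:equiv} applies, yielding both reductions.

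I do not anticipate a genuine obstacle here; the only point that needs a moment of care is the bookkeeping of representations. One must make sure that the reduction from $\escore_{\cshapley}(q)$ to $\pqe(q)$ produced by Corollary~\ref{cor:equiv} (and ultimately by Theorem~\ref{thm:ev-to-escore}) only ever calls the $\ev(\calF_q)$ oracle on inputs that are again of the form $(D', (p'_f), )$ for an actual database $D'$ — in other words, that the oracle calls stay within the class $\calF_q$ rather than requiring, say, conditioned versions $\phi_{+x}$ or $\phi_{-x}$ that might not be expressible as provenance of $q$ over some database. Inspecting the proof of Theorem~\ref{thm:ev-to-escore}: the chain $\escore_c(\calF) \tr \envss(\calF) \tr \evs(\calF) \tr \ev(\calF)$ only ever rescales the probability values (via the substitutions $p^{z_i}_x$, $p^z_x$) and never modifies the underlying representation of $\phi$, and Lemma~\ref{lem:envss-to-escore} in particular simulates the conditioning purely by setting $z \in \{0,1\}$ rather than by building $\phi_{+x}$ syntactically — precisely the subtlety flagged in its proof. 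So all oracle queries are of the form "same database $D$, new rational probabilities," which is a legitimate instance of $\pqe(q)$. Similarly the reduction from $\ev(\calF_q)$ to $\eshapley(\calF_q)$ of Proposition~\ref{prp:eshapley-to-ev} only sums $\eshapley(\phi, x)$ over $x \in V = D$, again keeping the same database. Thus both directions go through and the corollary follows.
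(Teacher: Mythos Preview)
Your proposal is correct and follows essentially the same approach as the paper: instantiate Corollary~\ref{cor:equiv} with the provenance class $\calF_q$ and observe that $\calF_q$ is \kl{reasonable} because $q$ is fixed. The paper's own proof is slightly terser (citing Theorem~\ref{thm:pqe-to-escore} for one direction and Corollary~\ref{cor:equiv} for the other), and your extra paragraph verifying that the oracle calls in the underlying reductions only modify probability values and never the database itself is a sound sanity check, though not strictly needed since those reductions were already stated for arbitrary $\calF$.
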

\begin{proof}
  One direction is Theorem~\ref{thm:pqe-to-escore}. The other direction comes
  from Corollary~\ref{cor:equiv}, using the same $\calF_q$ as in the proof of
  Theorem~\ref{thm:pqe-to-escore}, and noticing that $\calF_q$ is
  \kl{reasonable} because the query is fixed.
\end{proof}
In particular, this gives a dichotomy on $\escore_{\cshapley}(q)$
between~$\mathsf{P}$ and
$\mathsf{\#P}$-hard for unions of conjunctive queries (UCQs), or more generally for queries
that are \emph{closed under
homomorphisms}~\cite{dalvi2013dichotomy,DBLP:conf/icdt/Amarilli23}. This should
be compared with the corresponding result for (non-expected)
$\score_{\cshapley}(q)$, where a dichotomy is currently only known for
\emph{self-join--free conjunctive
queries}~\cite{livshits2021shapley,kara2023shapley}.

For Banzhaf values, even though $\escore_{\cbanzhaf}(q) \tr
\pqe(q)$ is true for any Boolean query by Theorem~\ref{thm:pqe-to-escore}, it
is not clear how to obtain the other direction from
Proposition~\ref{prp:ebanzhaf-to-ev}: indeed, the class~$\calF_q$ from
above has in general no reason to be closed under conditioning nor under taking
conjunctions or disjunctions with fresh variables. Yet, we mention that~\cite[Proposition 5.6]{livshits2021shapley}
shows a dichotomy for (non-expected) $\score_{\cbanzhaf}(q)$ for
self-join--free CQs: the tractable queries are the \emph{hierarchical} queries,
while for non-hierarchical queries the problem is $\mathsf{\#P}$-hard. This dichotomy then
directly extends to $\escore_{\cbanzhaf}(q)$: the tractable side follows from
our Theorem~\ref{thm:pqe-to-escore} because $\pqe(q)$ is in $\mathsf{P}$ for hierarchical
queries, while the hardness result is inherited by Fact \ref{fact:eshapley-to-shapley} from the hardness of
$\score_{\cbanzhaf}(q)$ shown in~\cite{livshits2021shapley}.

\paragraph*{Provenance computation and compilation.} Unfortunately, not all queries are tractable for
probabilistic query evaluation. When faced with an intractable query, another
approach is to use the \emph{intensional
method}~\cite{DBLP:series/synthesis/2011Suciu}, which is to compute and
compile the
Boolean provenance of the query on the database in a formalism from knowledge
compilation that enjoys tractable computation of expected values, such as
\kl{d-D
circuits}.
When the provenance has been
computed as a \kl{d-D circuit}, we can use the results from Section~\ref{sec:dD} to
compute the expected Shapley-like scores. This is the route that we take in the
next section to compute these scores in practice.

\section{Implementation and Experiments}
\label{sec:exp}
In this section, we experimentally show that the computation of expected
Shapley-like scores is feasible in practice on some realistic queries
over probabilistic databases, despite the $\mathsf{\#P}$-hardness of the
problem in general and the high $O(|C|\times|V|^5)$ upper bound (see Theorem~\ref{thm:dD})
on the complexity of Algorithm~\ref{alg:dD} for \kl{d-Ds}. The objective is
not to provide a comprehensive experimental evaluation but to simply
validate that algorithms presented in this work have reasonable
complexity for practical applications.

\begin{table*}[t]
    \caption{Provenance computation time, knowledge compilation time and
    method, and total Shapley/Banzhaf computation time for all output
    tuples and all facts, in the deterministic
case and for expected values in the probabilistic case. The queries are the same
TPC-H queries used in~\cite{deutch2022computing} (without the LIMIT
operator used in~\cite{deutch2022computing}). All times
reported are in seconds.}
\footnotesize
    \label{tab:times}
    \resizebox{\columnwidth}{!}{
    \begin{tabular}{r*8r}
    \toprule
    \multicolumn{1}{c}{\textbf{TPC-H}}&
    \multicolumn{1}{c}{\textbf{\# Output}}&
    \multicolumn{1}{c}{\textbf{Provenance}}&
    \multicolumn{1}{c}{\textbf{Compilation}}&
    \multicolumn{1}{c}{\textbf{Compilation}}&
    \multicolumn{1}{c}{\textbf{Avg \kl{d-D}}}&
    \multicolumn{2}{c}{\textbf{Shapley time}}&
    \multicolumn{1}{c}{\textbf{Banzhaf time}} \\
    \multicolumn{1}{c}{\textbf{query}}&
    \multicolumn{1}{c}{\textbf{tuples}}&
    \multicolumn{1}{c}{\textbf{time}}&
    \multicolumn{1}{c}{\textbf{time}}&
    \multicolumn{1}{c}{\textbf{method}}&
    \multicolumn{1}{c}{\textbf{\#gates}}&
    \multicolumn{1}{c}{\textbf{Determ.}}&
    \multicolumn{1}{c}{\textbf{Expect.}}&
    \multicolumn{1}{c}{\textbf{Expect.}}\\
        \midrule
        3&11620&$2.125 \pm 0.029$&$1.226 \pm 0.008$&33\% dec., 67\% tree dec.&22&$0.762 \pm 0.005$&$1.758 \pm 0.011$&$0.468 \pm 0.002$\\
5&5&$1.117 \pm 0.022$&$0.044 \pm 0.000$&100\% tree dec.&1115&$0.766 \pm 0.001$&$40.910 \pm 0.447$&$0.191 \pm 0.000$\\
7&4&$1.215 \pm 0.053$&$0.017 \pm 0.000$&100\% tree dec.&750&$0.269 \pm 0.001$&$9.381 \pm 0.020$&$0.085 \pm 0.000$\\
10&1783&$1.229 \pm 0.023$&$0.018 \pm 0.000$&98\% dec., \02\% tree dec.&5&$0.023 \pm 0.000$&$0.037 \pm 0.000$&$0.015 \pm 0.000$\\
11&61&$0.174 \pm 0.023$&$0.001 \pm 0.000$&87\% dec., 13\% tree dec.&7&$0.001 \pm 0.000$&$0.002 \pm 0.000$&$0.001 \pm 0.000$\\
16&466&$0.247 \pm 0.027$&$0.084 \pm 0.000$&100\% tree dec.&65&$0.159 \pm 0.001$&$0.455 \pm 0.005$&$0.094 \pm 0.001$\\
18&91159&$2.711 \pm 0.298$&$0.749 \pm 0.005$&97\% dec., \03\% tree dec.&4&$0.655 \pm 0.002$&$1.008 \pm 0.007$&$0.490 \pm 0.003$\\
19&56&$1.223 \pm 0.239$&$0.000 \pm 0.000$&100\% dec.&3&$0.000 \pm 0.000$&$0.000 \pm 0.000$&$0.000 \pm 0.000$\\
\bottomrule

    \end{tabular}
    }
\end{table*}

\paragraph*{Implementation.} We rely on, and have extended,
ProvSQL~\cite{DBLP:journals/pvldb/SenellartJMR18}\footnote{\url{https://github.com/PierreSenellart/provsql}}, an
open-source
PostgreSQL extension that computes (between other things) the
Boolean provenance of a query over a database.
We let ProvSQL compute the Boolean provenance
of SQL queries over relational databases as a Boolean circuit,
and have extended this system to add the following features:
\begin{inparaenum}
    \item We compile Boolean provenance into a \kl{d-D} in the simple but
        common \emph{decomposable} case where every $\land$- or
        $\lor$-gate~$g$ is decomposable, i.e., for every two inputs $g_1$
        and $g_2$ to $g$, $\vars(g_1)\cap\vars(g_2)=\emptyset$. Note that, as we
        have already observed in Section~\ref{sec:dD},
        a decomposable $\lor$-gate of the form $A \lor B$ can be
        rewritten, using De Morgan's laws, into a decomposable $\land$-gate.
    \item For cases where this is not possible, we attempt to compile
        Boolean provenance into a \kl{d-D} by computing, if possible, a \emph{tree
        decomposition} of the circuit of treewidth $\leq 10$, and by then
        following the construction detailed in
        \cite[Section 5.1]{amarilli2020connecting} to turn any Boolean
        circuit into a \kl{d-D} in linear time when the treewidth is fixed.
    \item Otherwise, we default to ProvSQL's default compilation of
        circuits into \kl{d-Ds}, which amounts to coding the circuit as
        a CNF using the Tseitin transformation~\cite{tseitin1968complexity}
        and then calling an external knowledge compiler,
        d4~\cite{DBLP:conf/ijcai/LagniezM17}.
    \item We have implemented \emph{directly within ProvSQL}
        Algorithm~\ref{alg:dD} to compute expected Shapley values on
        \kl{d-Ds}, its
        simplification when all $p_y$ are set to~$1$ detailed at the end
        of Section~\ref{sec:dD}, as well as the algorithm to compute
        expected Banzhaf values in the proof of
        Theorem~\ref{thm:banzhaf-dD}. They are all implemented with
        floating-point numbers.
\end{inparaenum}

In particular, this approach is not restricted to queries
on the tractable side of the dichotomy of~\cite{dalvi2013dichotomy}.
In addition, we benefit from the fact that, since late 2021, ProvSQL
stores the provenance circuit in main memory, which speeds
provenance computation up (earlier versions stored the provenance circuit
within the database, on disk).

\paragraph*{Experiment setup.}
Following~\cite{deutch2022computing,abramovich2023banzhaf}, we
used the TPC-H 1~GB benchmark, with standard generated data and 8
standard TPC-H queries
adapted to remove nesting and aggregation, as provided by the authors
of~\cite{deutch2022computing}\footnote{\url{https://github.com/navefr/ShapleyForDbFacts},
archived as \href{https://archive.softwareheritage.org/swh:1:rev:7a46a9fa381194097b81a7aae705d396872b26e3;origin=https://github.com/navefr/ShapleyForDbFacts;visit=swh:1:snp:9de62ad34dffeb67429da0cb0def3183e09c79ad}{swh:1:rev:7a46a9fa381194097b81a7aae705d396872b26e3}}. We use the exact same
queries as in~\cite{deutch2022computing}, except that the \texttt{LIMIT}
operator used for the experiments of that paper was removed, to obtain a
larger and more realistic benchmark (we end up with 105\,154 output tuples for these 8
queries). Probabilities were
drawn uniformly at random for all facts.

Experiments were run on a desktop Linux PC with Xeon W3550 2.80GHz CPU,
64~GB RAM (8~GB of which were made available for PostgreSQL's
shared buffers), running version 14.9 of PostgreSQL; our code has been
incorporated in ProvSQL, and we ran the latest
version of ProvSQL as of December
2023\footnote{Archived on Software Heritage as \href{https://archive.softwareheritage.org/swh:1:rev:bba98e1d96af4c5a25ac672a2f67ea44ed869f8f;origin=https://github.com/PierreSenellart/provsql;visit=swh:1:snp:7e00476360f1bcff9b46b10193b7bf29ebf7d4b0}{swh:1:rev:bba98e1d96af4c5a25ac672a2f67ea44ed869f8f}}.
Data for PostgreSQL is stored on
standard magnetic hard drives in RAID~1.

\paragraph*{Results}
We show in Table~\ref{tab:times} results of these experiments. For each
query, we report: the number of output tuples; the total time required by ProvSQL to evaluate the query and
compute the provenance representation of every output tuple; the total time required by the
compilation of the Boolean provenance circuits of all query results to
\kl{d-Ds}; the method used to produce these \kl{d-Ds}\footnote{A different method might be
used for each output tuple; we report the proportion of ``dec.'' when the
obtained circuit was already decomposable; and of ``tree dec.'' when we
used the
tree decomposition approach; none of the circuits
produced
required using an external knowledge compiler.}; the average number of
gates of the resulting \kl{d-Ds}; the total time needed to compute
(expected) Shapley values of all query outputs for all relevant facts\footnote{By \emph{relevant facts},
we mean here the facts that appear in the provenance circuits. Indeed, the other facts have a score of zero.} in
the deterministic case (where all probabilities are set to~1) and in the
probabilistic case; the same for expected Banzhaf values in the
probabilistic case\footnote{There is of course virtually no difference between computing deterministic
    and expected Banzhaf values, since the algorithm we use (that of
Theorem~\ref{thm:banzhaf-dD}), is the same.}.
All times are
in seconds, repeated over 20 runs of each query, with the mean and
standard deviation shown. To avoid caching provenance across multiple
runs or multiple queries, the ProvSQL circuit was reset each
time and PostgreSQL restarted.

We have the following observations regarding the experimental results
(also compare with the results from Table~1
of~\cite{deutch2022computing}).
\begin{asparaenum}
\item ProvSQL is able to compute in a reasonable amount of time (at most
a couple of seconds) the
output of all queries, along with their Boolean provenance as a circuit;
this contrasts with the results of~\cite{deutch2022computing} where
provenance computation time could take up to 6 hours for query~3, even
when limited to output only~100 tuples; we assume this is the result of
recent ProvSQL optimizations and in-memory storage of the Boolean provenance
circuit.
\item Compilation to a \kl{d-D} takes a time that is comparable
to provenance computation, and uses a combination of
    interpreting the circuit as a decomposable one
    and the tree decomposition algorithm
    of~\cite{amarilli2020connecting}; using an external
    knowledge compiler, which was what was done
    in~\cite{deutch2022computing}, is never required. Note that
    compilation is much faster than reported
    in~\cite{deutch2022computing} (remember that the times
    in~\cite{deutch2022computing} need to be
    multiplied by the number of output tuples, whereas we report the sum
    of all compilation times); the provenance circuits for queries 5 and
    7 could not even be compiled to a \kl{d-D} in~\cite{deutch2022computing}.
\item The total time required for deterministic Shapley value computation
    is of similar magnitude as query evaluation as well and comparable to those reported
    in~\cite{deutch2022computing} (except in one case, for query~19,
    where numbers reported in~\cite{deutch2022computing} are abnormally
    high).
\item Computing expected Shapley values in a probabilistic setting using
    Algorithm~\ref{alg:dD} incurs a higher cost, but remains more
    practical in practice than the high theoretical complexity of this
    algorithm may suggest -- the maximum time required is for query~5,
    with 41 seconds to compute the expected Shapley values over five
    \kl{d-Ds}
    whose average size is over a thousand of gates.
\item
    The total time required
    for Banzhaf value computation is significantly lower than that of deterministic
    Shapley value computation, especially for circuits with large numbers
    of gates or variables, as consistent with the established
    complexities.
\item Though query evaluation and provenance computation can be marred
    with significant performance differences from one run to the next
    (due to disk caching, interaction with PostgreSQL, etc.), there is
    little variation of the performance of knowledge compilation and
    expected Shapley-like value
    computation.
\end{asparaenum}

To summarize, the experiments validate the practicality of the algorithms
presented in this paper for computation of expected Shapley-like scores
over probabilistic databases.

\section{Related work}
\label{sec:relwork}
%\rev{We discuss here the three main
%lines of work that our results should be compared to. }

% one studies the complexity of computing (non-expected) Shapley-like values
% for databases and Boolean functions. The second one studies
% the complexity of computing the SHAP-score, a score used in
% explainable AI and machine learning. In both cases the
% problem has been studied by relating it to that of computing expected values
% or of computing model counts, and also by proposing tractable algorithms for
% deterministic and decomposable Boolean circuits. In this sense then, the
% structure of our paper is similar to existing literature. Yet we think that all
% these problem variants are worth studying because the underlying notions are
% solutions to different problems.
% We summarize the existing related literature here and compare to it,
% starting with what is closest to our work.

\rev{\paragraph*{Probabilistic scores from Game Theory.}
While Shapley value~\cite{shapley1953value} is one of the most popular
measures of player contribution in a game, it does not take into account the
probabilistic aspect of some of these games. To address this issue, a long line
of work
\cite{owen1972multilinear,weber1988probabilistic,laruelle2008potential,carreras2015multinomial,carreras2015coalitional,koster2017prediction,borkotokey2023expected}
defines and studies various ways in which scoring mechanisms can be
incorporated in the setting of probabilistic games. For instance, the expected
Shapley value that we study in this paper has been defined in
\cite{borkotokey2023expected} (see Definition 9), where it is shown that, just like the non-probabilistic Shapley value, it is
the only scoring mechanism that satisfies a natural set of axioms over probabilistic games (see Theorems
3 and 4 of~\cite{borkotokey2023expected}). In these works, authors mainly focus on analysing the
characteristics of such scoring mechanisms, in particular in terms
of axioms they verify. To the best of our knowledge, no complexity results or algorithms
are proposed, whereas this is precisely what we study in this paper.}

\paragraph*{Shapley-like scores.}
The authors of
\cite{DBLP:conf/icdt/LivshitsBKS20,reshef2020impact,livshits2021shapley,deutch2022computing}
study the complexity of the problems $\score_{\cshapley}$ and
$\score_{\cbanzhaf}$, over Boolean functions or, most often, instantiated in
the setting of relational databases. See
\cite{bertossi2023shapley} for a survey for databases. In
particular, \cite{deutch2022computing} shows that, for any query, the problem
$\score_{\cshapley}$ can be reduced in polynomial time to probabilistic query
evaluation for the same query. The authors of
\cite{kara2023shapley} show an analogous result for Boolean functions, also
obtaining the other direction of the reduction (under some assumptions).
Formally, define the \emph{model counting problem} for class $\calF$ of Boolean functions as follows:
given as input $\phi \in \calF$ over variables~$V$, compute $\#\phi \defeq \{Z \in 2^V \mid \phi(Z)=1\}$.
They then show (we refer to their article for the definition of closure under OR-substitutions):
\begin{theorem}[{\cite[Corollary 7]{kara2023shapley}}]
We have $\shapley(\calF) \equivt \mc(\calF)$ for any class $\calF$ that is
closed under OR-substitutions.
\end{theorem}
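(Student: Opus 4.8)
The plan is to route both reductions through the \emph{size-stratified model counts} $\#_k(\phi)\defeq|\{Z\subseteq V:|Z|=k,\ \phi(Z)=1\}|$ of a Boolean function $\phi$ over a variable set $V$ of size $n$ (given in unary), passing between the plain count $\#(\phi)=\sum_{k}\#_k(\phi)$ and these refined counts by \emph{padding} variables with fresh ones through OR-substitutions and then applying polynomial (Vandermonde) interpolation. This is exactly the interplay of closure under OR-substitutions and interpolation that appears, in a probabilistic guise, in our proofs of Lemma~\ref{lem:envss-to-escore} and of Lemmas~\ref{lem:evs-to-envss}--\ref{lem:ev-to-evs}.

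For the direction $\shapley(\calF)\tr\mc(\calF)$, I would first rewrite the defining sum of $\shapley(\phi,V,x)$ by grouping terms according to the size $\ell=|E|$, which gives $\shapley(\phi,V,x)=\sum_{\ell=0}^{n-1}\cshapley(n,\ell)\,(A_{\ell+1}-B_{\ell})$, where $A_k=|\{Z\subseteq V:|Z|=k,\ x\in Z,\ \phi(Z)=1\}|$ and $B_k=|\{Z\subseteq V:|Z|=k,\ x\notin Z,\ \phi(Z)=1\}|$; since $\cshapley$ is \kl{tractable}, it then suffices to recover all the $A_k$ and $B_k$ from the $\mc$ oracle. To do this I would OR-substitute $x$ by a disjunction of $t$ fresh variables and, simultaneously, every other variable $y$ by a disjunction of $s$ fresh variables, which keeps the function in $\calF$. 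A short computation shows that, writing $u$ (resp.\ $v$) for the number of truth assignments to the fresh block attached to $x$ (resp.\ to another variable) that make that block true, the quantity $v$ times the model count of the padded function is, for each fixed $t$, a polynomial in $v$ whose $k$-th coefficient is $u\,A_k+B_{k-1}$. Interpolating in $v$ over enough distinct values of $s$, and then using two values of $t$ with distinct $u$ to disentangle $A_k$ from $B_{k-1}$, recovers every $A_k$ and $B_k$, hence $\shapley(\phi,V,x)$.

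For the converse, $\mc(\calF)\tr\shapley(\calF)$, I would again pad by OR-substitution but now call the $\shapley$ oracle. Fixing $x\in V$ and OR-substituting $x$ by a disjunction of $t$ fresh variables, the point is that bringing one fresh variable $z$ into a coalition can only switch the ``$x$-block'' from false to true, so the Shapley value of $z$ in the padded function equals $\sum_{\ell=0}^{n-1}\cshapley(N_t,\ell)\,(a^{(x)}_{\ell}-b^{(x)}_{\ell})$, where $N_t$ grows linearly with $t$, $a^{(x)}_{\ell}=|\{F\subseteq V\setminus\{x\}:|F|=\ell,\ \phi(F\cup\{x\})=1\}|$, and $b^{(x)}_{\ell}=|\{F\subseteq V\setminus\{x\}:|F|=\ell,\ \phi(F)=1\}|$. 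Ranging $t$ over $n$ values and inverting the matrix $(\cshapley(N_t,\ell))_{t,\ell}$---which a rescaling of rows reduces to a generalized Vandermonde determinant, non-zero for a suitable choice of the $n$ padding sizes---yields all the differences $m^{(x)}_\ell\defeq a^{(x)}_\ell-b^{(x)}_\ell$. Summing over $x\in V$ and using the averaging identities $\sum_{x}a^{(x)}_\ell=(\ell+1)\,\#_{\ell+1}(\phi)$ and $\sum_{x}b^{(x)}_\ell=(n-\ell)\,\#_\ell(\phi)$ produces the first-order linear recurrence $(\ell+1)\,\#_{\ell+1}(\phi)=(n-\ell)\,\#_\ell(\phi)+\sum_{x}m^{(x)}_\ell$; its general solution is pinned down up to the homogeneous term $c\binom{n}{\ell}$, i.e.\ up to the single value $\#_0(\phi)=\phi(\emptyset)$, which is recovered by a short separate argument (only the degenerate case where $\phi$ is constant requires any care). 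Summing the $\#_k(\phi)$ then gives $\#(\phi)$.

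The main obstacle is precisely that $\calF$ need not be \kl{closed under conditioning}, so one cannot directly form $\phi_{+x}$ or $\phi_{-x}$ in order to stratify satisfying assignments by whether they contain $x$; all such conditioning must be simulated by the OR-substitution gadgets above, which is exactly why closure under OR-substitutions is the right hypothesis for this theorem. The rest is bookkeeping: verifying non-singularity of the several Vandermonde-type matrices (their entries are built from $\cshapley$ and from binomial coefficients), checking that only polynomially many fresh variables and oracle calls are introduced and that all intermediate integers remain of polynomial size, and disposing of the degenerate constant-function case.
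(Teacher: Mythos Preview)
This theorem is not proved in the present paper: it appears in the related-work section as a citation of \cite[Corollary~7]{kara2023shapley}, with the explicit remark ``we refer to their article for the definition of closure under OR-substitutions.'' There is therefore no proof in this paper to compare your proposal against.

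That said, your sketch is in the right spirit and broadly aligned with how the cited result is obtained: reduce both directions to recovering the size-stratified model counts $\#_k(\phi)$, and use OR-substitutions to pad variables so that polynomial interpolation (Vandermonde-type systems) separates the strata, in the same way Lemmas~\ref{lem:evs-to-envss}--\ref{lem:ev-to-evs} of this paper do in the probabilistic setting. The one place where your plan is genuinely delicate is the $\mc(\calF)\tr\shapley(\calF)$ direction: you must argue that the matrix $(\cshapley(N_t,\ell))_{t,\ell}$ is invertible for \emph{some} explicit choice of padding sizes $t$, and your appeal to a ``generalized Vandermonde determinant'' hides real work, since after rescaling the entries are $1/\binom{N_t-1}{\ell}$ rather than monomials $N_t^{\,\ell}$; one typically handles this either by a different change of basis or by choosing the $N_t$ so that the resulting system is triangular. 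Your recovery of $\phi(\emptyset)$ at the end also deserves a line, but you flagged it. If you want a benchmark, consult the proof in \cite{kara2023shapley} directly; the present paper does not supply one.
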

Notice the resemblance between, on the one side, these last two results that we
mention, and on the other side our Corollaries~\ref{cor:equiv}
and~\ref{cor:pqe-escore}. The difference is
that we study the \emph{expected} Shapley values. There is a~priori no
reason for the tractable cases to be the same as the non-expected variant:
indeed, the counting (or probabilistic) version of a problem is often much
harder than the decision one --- for instance probabilistic query evaluation is
often intractable for queries for which regular evaluation is easy.\footnote{So
  one could say that expected Shapley scores are to Shapley scores what
probabilistic query evaluation is to non-probabilistic query evaluation.} By
our results, this phenomenon does not occur for $\escore_{\cshapley}$. Since
$\escore_{\cshapley}$ strictly generalizes $\score_{\cshapley}$ (by
Fact~\ref{fact:eshapley-to-shapley}), the reduction from
$\escore_{\cshapley}$ to $\ev$ is more challenging, and indeed
one can check that our polynomial interpolation proofs are more involved than,
say, \cite[Proposition 3.1]{deutch2022computing}. On the other hand, our life
is made easier to prove the other direction of these equivalences. This
explains why we do not need the assumption of closure under
OR-substitutions in Corollary~\ref{cor:equiv}, and this is also what allows us to obtain a \emph{complete}
equivalence to probabilistic query evaluation in Corollary \ref{cor:pqe-escore}, \emph{no matter the
Boolean query}, whereas in the case of non-expected Shapley values this is only
known for self-join--free CQs \rev{(see also \cite{bienvenu2023shapley} which makes
progress on this problem, but does not solve it completely yet)}.
As for algorithms for d-D circuits
\cite{deutch2022computing,abramovich2023banzhaf}, we refer to the end of
Section~\ref{sec:dD}.

\paragraph*{SHAP-score.}
The authors of
\cite{van2022tractability,arenas2023complexity,van21tractability,arenas21tractability}
study the complexity of computing the SHAP-score. In particular
\cite{van2022tractability,van21tractability} show that it is equivalent to
computing expected values, and \cite{arenas21tractability,arenas2023complexity}
propose polynomial-time algorithms for d-D circuits. Thus, the landscape is
similar to what we obtain here. However, there is to the best of our knowledge
no formal connection between the SHAP-score and the expected scores that we
study here: in a nutshell, we compute the expected Shapley value where the game
function is the Boolean function $\phi$, whereas the SHAP score is computing
the Shapley value where the game function is a conditional expectation of
$\phi$. Hence, the two sets of results seem to be independent.

\section{Conclusion}
\label{sec:conclusion}
We proposed the new notion of expected Shapley-like scores for Boolean
functions, proved that computing these scores can always be reduced in
polynomial-time to the well-studied problem of computing expected values, and
that these two problems are often even equivalent (under commonplace
assumptions). We designed algorithms for deterministic and decomposable
Boolean circuits and implemented them in the setting of probabilistic
databases, where our preliminary experimental results show that these scoring
mechanisms could actually be used in practice.
We leave as future work the study of approximation algorithms for this new
notion. In particular, it is known that $\score_{\cshapley}(q)$ has a
\emph{fully polynomial-time randomized scheme}~\cite{jerrum1986random} whenever
$q$ is a UCQ~\cite{bertossi2023shapley}, and one could study whether this stays
true for the probabilistic variant. Still we note that, since the reduction
from Fact~\ref{fact:eshapley-to-shapley} is parsimonious, we inherit the few
hardness results of the non-probabilistic setting, such as those of
\cite{reshef2020impact} for conjunctive queries with negations.

\begin{acks}
  This research is part of the program
\grantnum[https://descartes.cnrsatcreate.cnrs.fr/]{NRF}{DesCartes}
and is
  supported by the
\grantsponsor{NRF}{National Research Foundation, Prime Minister’s Office, Singapore}{https://www.nrf.gov.sg/}
  under its Campus for Research Excellence and Technological Enterprise
(CREATE) program.
  This work was done in part while Mikaël Monet and Pierre Senellart were visiting the Simons Institute for
the Theory of Computing.
\end{acks}

\clearpage

\bibliographystyle{ACM-Reference-Format}
\bibliography{main}

\end{document}